\documentclass[12pt, a4paper]{article}
\usepackage{jheparxiv}
\usepackage[utf8]{inputenc}
\usepackage{amsmath}
\usepackage{amsfonts}
\usepackage{amssymb}
\usepackage{latexsym}
\usepackage{mathrsfs}
\usepackage{braket}		
\usepackage{setspace}
\usepackage{graphicx}
\usepackage{color}
\usepackage{amsthm} 
\usepackage{xcolor}
\usepackage{slashed}
\usepackage{twistor}
\usepackage{skak}
\usepackage[all]{xy}
\usepackage{tikz-cd}
\usepackage{mathtools}
\usepackage{tikz}
\newcommand*\circled[1]{\tikz[baseline=(char.base)]{
            \node[shape=circle,draw,inner sep=2pt] (char) {#1};}}

\newcommand{\bep}{\begin{picture}}
\newcommand{\eep}{\end{picture}}


\newcounter{YoungHeight}\newcounter{YoungWidth}

\newcounter{Mul1}\newcounter{Mul2}\newcounter{Mul3}\newcounter{Mul4}
\newcounter{A0}\newcounter{A1}\newcounter{A2}
\newcounter{B3}
\newcounter{C3}\newcounter{C4}
\newcounter{D1}\newcounter{D2}\newcounter{D3}
\newcounter{T0}\newcounter{T1}

\newlength{\txtHShift}

\newlength{\txtWidth}

\newcommand{\HalfLength}[2]{\setcounter{Mul1}{#1}\setcounter{Mul2}{#1}\addtocounter{Mul1}{\value{Mul2}}\addtocounter{Mul1}{\value{Mul2}}%
\addtocounter{Mul1}{\value{Mul2}}\addtocounter{Mul1}{\value{Mul2}}\setcounter{#2}{\value{Mul1}}}


\newcommand{\Length}[1]{#10}

\newcommand{\shiftedText}[2]{{\hspace{#1}#2}}
\newcommand{\calcHShift}[1]{\settowidth{\txtWidth}{#1}\setlength{\txtHShift}{-0.5\txtWidth}}

\newcommand{\TextCenter}[3]{{\HalfLength{#2}{T0}%
\HalfLength{#3}{T1}\addtocounter{T1}{-3}\calcHShift{#1}%
\put(\value{T0},\value{T1}){\shiftedText{\txtHShift}{#1}}}}



\newcommand{\RectT}[3]{\bep(\Length{#1},\Length{#2})\put(0,0){\line(1,0){\Length{#1}}}\put(0,0){\line(0,1){\Length{#2}}}%
\put(\Length{#1},\Length{#2}){\line(-1,0){\Length{#1}}}\put(\Length{#1},\Length{#2}){\line(0,-1){\Length{#2}}}#3{#1}{#2}\eep}


\newcommand{\RectBRowUp}[4]{{\bep(\Length{#1},20)\put(0,0){\RectT{#2}{1}{\TextCenter{#4}}}%
\put(0,10){\RectT{#1}{1}{\TextCenter{#3}}}\eep}}



\newcommand{\tm}{\mathtt{m}}
\newcommand{\tn}{\mathtt{n}}
\newcommand{\tv}{\mathtt{v}}
\newcommand{\tto}{\mathtt{o}}

\newcommand{\redone}{\boldsymbol{\textcolor{red!70!black!100}{1}}}

\newcommand{\tI}{\mathtt{I}}
\newcommand{\tV}{\mathtt{V}}
\newcommand{\tq}{\mathtt{q}}

\newcommand{\kbold}{\boldsymbol{k}}
\newcommand{\mso}{\mathfrak{so}}
\newcommand{\msp}{\mathfrak{sp}}

\newcommand{\Tr}{\text{Tr}}

\newcommand{\ta}{\mathtt{a}}
\newcommand{\tb}{\mathtt{b}}
\newcommand{\tc}{\mathtt{c}}
\newcommand{\td}{\mathtt{d}}
\newcommand{\te}{\mathtt{e}}
\newcommand{\tf}{\mathtt{f}}

\newcommand{\BG}{\mathtt{BG}}
\newcommand{\boldJ}{\boldsymbol J}

\newcommand{\tp}{\mathsf{p}}

\usepackage{graphicx}

\newcommand{\nn}{\nonumber}

 \def\one{\mbox{1 \kern-.59em {\rm l}}}

\newcommand{\sa}{\mathsf{a}}

\newcommand{\ttb}{\underline{\mathtt{t}}}

\newcommand{\tA}{\mathtt{A}}
\newcommand{\tB}{\mathtt{B}}
\newcommand{\tC}{\mathtt{C}}
\newcommand{\tD}{\mathtt{D}}

\renewcommand{\d}{\mathrm{d}}

\newcommand{\boldI}{\boldsymbol I}
\newcommand{\Ibold}{\boldsymbol{I}}

\newcommand{\ib}{\mathtt{I}}
\newcommand{\jb}{\mathtt{J}}

\newcommand{\cY}{\hat{\mathcal{Y}}}
\newcommand{\eps}{\epsilon}

\newcommand{\sA}{\tilde{\mathsf{A}}}

\newcommand{\msu}{\mathfrak{su}}
\newcommand{\PP}{\mathbb{P}}
\newcommand{\PPb}{\overline{\mathbb{P}}}

\newcommand{\hsikkt}{$\hs$-IKKT }   

\oddsidemargin 22mm
\evensidemargin -1mm
\textheight 220mm
\textwidth 165mm

\title{Spinorial description for Lorentzian \hsikkt}


\author[]{Harold C. Steinacker}
\affiliation[]{Department of Physics, University of Vienna, \\
Boltzmanngasse 5, A-1090 Vienna, Austria}
\emailAdd{harold.steinacker@univie.ac.at}

\author[]{\& Tung Tran}

\emailAdd{tung.tran@univie.ac.at}

\abstract{We introduce a novel spinorial description for the higher-spin gauge theory induced by the IKKT matrix model on an FLRW spacetime with Lorentzian signature, called Lorentzian \hsikkt theory. The new description is based on Weyl spinors transforming under the space-like isometry subgroup $SL(2,\C)$ of the structure group $SO(2,4)\simeq SU(2,2)$.
It allows us to exploit the full power of the spinor formalism in Lorentzian signature, in contrast to a previous formalism based on the compact subgroup $SU(2)_L\times SU(2)_R$ of $SU(2,2)$. Some cubic vertices of the Yang-Mills sector and the corresponding scattering amplitudes are computed. 
We observe that the $n$-point (for $n\geq 4$) tree-level amplitudes are typically non-trivial on-shell, but exponentially suppressed in the late-time regime. While Lorentz invariance of the higher-spin amplitudes is not manifest, it is expected to be restored by higher-spin gauge invariance.}

\begin{document}

 \begin{flushright}
  UWThPh-2023-30 
 \end{flushright}
 \vspace{-10mm}

\maketitle
\section{Introduction}


In recent years, considerable evidence has been gathered suggesting that the no-go theorems \cite{Weinberg:1964ew,Coleman:1967ad} that were thought to rule out the existence of local higher-spin 
($\hs$)  theories in 4-dimensional flat space can be overcome by relaxing some of the assumptions in quantum field theory, such as parity invariance \cite{Tran:2022amg} or local Lorentz invariance \cite{Steinacker:2023cuf}. 

On the one hand, when parity invariance is relaxed, one can construct (quasi-) chiral massless higher-spin theories \cite{Metsaev:1991mt,Metsaev:1991nb,Ponomarev:2016lrm,Ponomarev:2017nrr,Metsaev:2018xip,Metsaev:2019aig,Tsulaia:2022csz,Sharapov:2022faa,Sharapov:2022awp,Herfray:2022prf,Adamo:2022lah} that have simple $S$-matrices \cite{Skvortsov:2018jea,Skvortsov:2020wtf,Adamo:2022lah}. On the other hand, if mild Lorentz violation is allowed, one can construct a supersymmetric higher-spin gauge theory induced by the IKKT matrix model \cite{Ishibashi:1996xs} called \hsikkt \cite{Sperling:2018xrm,Steinacker:2019awe,Steinacker:2023zrb}. These classes of higher-spin theories do not suffer from unitarity issues as in the case of conformal higher-spin gravity~\cite{Tseytlin:2002gz,Segal:2002gd,Bekaert:2010ky,Basile:2022nou}.\footnote{There are also $3d$ higher-spin gravities \cite{Blencowe:1988gj,Bergshoeff:1989ns,Pope:1989vj,Fradkin:1989xt,Campoleoni:2010zq,Henneaux:2010xg,Gaberdiel:2010pz,Gaberdiel:2012uj,Gaberdiel:2014cha,Grigoriev:2019xmp,Grigoriev:2020lzu}. However, they do not have propagating degrees of freedom, and they are topological.} One crucial distinction between (quasi-) chiral higher-spin gravities (HSGRAs) and \hsikkt is that \hsikkt can be defined to be unitary or \emph{Lorentzian real}
, while (quasi-) chiral HSGRAs are well-defined only on self-dual spacetime with Euclidean or split signatures. 

In \cite{Steinacker:2023cuf}, we showed that \hsikkt theory on an $SO(1,3)$-invariant cosmological spacetime $\cM^{1,3}$ leads to time-dependent couplings which suppress the (mildly Lorentz-violating) cubic $\hs$ vertices by a factor of $e^{-\frac{3}{2}\tau}$, and the quartic vertices by a factor of $e^{-3\tau}$. Intriguingly, there is no explicit  Lorentz violation in the classical, lowest-spin sector for the cubic vertices. 
Moreover, for physical fields, Lorentz invariance can be recovered in \hsikkt theory from volume-preserving gauge transformations. 
However, this theory has some non-standard features. In particular, a gauge field with internal spin-$s$ 
carries the same dof. as a massive higher-spin field. 
 Since there is no physical mass parameter in the IKKT model, these degrees of freedom are referred to as  ``would-be-massive'' ($\hs$) modes \cite{Steinacker:2019awe}.\footnote{The would-be-massive higher-spin modes in \hsikkt theory help to mitigate the effect of Lorentz violation caused by the non-commutativity of matrices in certain covariant quantum (fuzzy) spaces such as $S_N^4$ or $H_N^4$.} Furthermore, \hsikkt leads to a spectrum with \emph{truncated} towers of higher-spin modes, which arise naturally from the underlying doubleton representations of $SU(2,2)$. This is in contrast to conventional HSGRAs in $d\geq 4$, where infinitely many higher-spin fields are always needed for consistency (cf. \cite{Fradkin:1986ka,Eastwood:2002su}). 
Due to these remarkable features,  Lorentzian \hsikkt  appears to be a physically reasonable and perhaps near-realistic higher-spin theory.

The \hsikkt theory is most easily formulated using the vectorial description, see e.g. \cite{Sperling:2018xrm,Steinacker:2023zrb}. However, the diagonalization of higher-spin modes in the gauge sector is quite involved \cite{Sperling:2018xrm,Steinacker:2019awe}, and the computation of amplitudes involving gauge fields is expected to be complicated due to the mixing between higher-spin modes. To alleviate these difficulties,  a spinorial formulation of the model is desirable. Such a formulation of \hsikkt was proposed previously in \cite{Steinacker:2023zrb}, in terms of spinors transforming under the compact subgroup $SU(2)_L\times SU(2)_R\subset SU(2,2)$. While this works quite well for the Euclidean case, that compact subgroup does not allow a global spinorial description for the Lorentzian \hsikkt on $\cM^{1,3}$, seriously restricting the power of the spinor formalism. The basic issue is that the spinors in \cite{Steinacker:2023zrb} were \emph{not} Weyl spinors. 
 
In the present work, we overcome this issue by working with the local non-compact space-like subgroup $SL(2,\C)\simeq SO(1,3)$ of the FLRW background and using an appropriate chiral basis for $\gamma$-matrices, in contrast to \cite{Sperling:2018xrm}. Our main results include:
\begin{itemize}
    \item[-] A suitable definition of Weyl spinors allows us to obtain a global spinorial description for Lorentzian \hsikkt theory on $\cM^{1,3}$. 
    \item[-] A spinorial organization of the $\hs$ modes and a simple spinorial description for the $\Box$ operator. This allows us to diagonalize all higher-spin modes in the Yang-Mills sector without much effort.
    \item[-] Explicit spacetime expressions for some cubic and quartic vertices of the Yang-Mills sector in Lorentzian \hsikkt. We also compute some simple scattering amplitudes, and observe that $n$-point scattering amplitudes are suppressed by a factor of $e^{-\frac{3}{2}(n-2)\tau}$, where $\tau$ is a cosmological time-like parameter. 
\end{itemize}

The paper is organized as follows. Section \ref{sec:2} provides a review of the IKKT matrix model and \hsikkt theory on $\cM^{1,3}$. Section \ref{sec:3} gives a twistor realization of $\cM^{1,3}$ and a definition of Weyl spinors transforming under the space-like isometry group $SO(1,3)\simeq SL(2,\C)$. We also discuss the internal higher-spin structures on $\cM^{1,3}$. Section \ref{sec:SO(1,3)spinors-IKKT} provides a spinorial formulation for the frame on $\cM^{1,3}$ and also for the $\hs$ modes. We then derive the spacetime action of the Yang-Mills sector in \hsikkt theory on $\cM^{1,3}$ in the local 4-dimensional physical regime. Section \ref{sec:amplitudes} computes some 3-point amplitudes of the Yang-Mills sector. We observe that these on-shell amplitudes are non-trivial, but strongly suppressed in the late-time regime. Upon projecting these three-point amplitudes to the massless sector, we observe that the Yang-Mills sector of the Lorentzian theory exhibits some chiral features. We conclude with some discussions in Section \ref{sec:discussion}. Three appendices are also included:
\begin{itemize}
    \item Appendix \ref{app:C} provides additional material for Section \ref{sec:2} on the fuzzy 4-hyperboloid $H^4_N$ and the doubleton minimal representations associated to $SU(2,2)$.
    \item Appendix \ref{app:A} provides technical details of averaging over fiber coordinates to obtain spacetime action of Lorentzian \hsikkt theory.
    \item Appendix \ref{app:B} provides further details for the computation of some 3-point scattering amplitudes in the Yang-Mills sector of the model.
\end{itemize}

\section{Review of the vectorial \texorpdfstring{\hsikkt}{hsikkt}}\label{sec:2}
This section provides a short review of the IKKT matrix model on $\cM^{1,3}$ in the vectorial formulation \cite{Sperling:2019xar}.

\paragraph{Notation.} Since there are many types of indices and symbols used in this paper, we summarize them in table \ref{table1}. We use the strength-one symmetrization convention, e.g. $A_aB_a=\frac{1}{2}(A_{a_1}B_{a_2}+A_{a_2}B_{a_1})$, and write  $T_{a(s)}=T_{a_1\ldots a_s}$ to describe a totally symmetric rank-$s$ tensor. 

\renewcommand{\arraystretch}{1.1}
\begin{table}[ht!]
    \centering
    \begin{tabular}{|c|c||c|} \hline
       Indices  & Group  & Symbols\\ \hline\hline
       $\Ibold,\boldJ=0,1,\ldots,9$ & $SO(1,9)$ & $y^{\Ibold}$\\ \hline
       $\tA,\tB=0,1,\ldots 5$  & $SO(2,4)$ & $m^{\tA\tB},\theta^{\tA\tB}$ \\
      $a,b=0,1,\ldots 4$ & $SO(1,4)$ & $m^{ab},y^a,\theta^{ab}$ \\
      $\hat a,\hat b=0,1,2,3,5$ & $SO(2,3)$ & $m^{\hat a\hat b},t^{\hat a}$ \\
      $\mu,\nu=0,1,2,3$ & $SO(1,3)$ & $\eta^{\mu\nu},\theta^{\mu\nu},m^{\mu\nu},t^{\mu},y^{\mu}$ \\ \hline
      $A,B=1,2,3,4$ & $SU(2,2)$ & $Z^A,\bar Z_A,C^{AB}$ \\
      $\alpha,\beta=0,1$ and $\dot\alpha\,\dot\beta=\dot 0,\dot 1$ & $SL(2,\C)$ & $\lambda^{\alpha},\mu^{\dot\alpha},\bar\mu_{\alpha},\bar\lambda_{\dot\alpha},\eps^{\alpha\beta},\eps^{\dot\alpha\dot\beta}$\\\hline
       $\mathtt{I},\mathtt{J}=4,5,6,7,8,9$ & $SO(6)$ & $\phi^{\mathtt{I}}$ \\\hline
    \end{tabular}
    \caption{Summary of notation. The meaning of each symbol will be explained in the text.}
    \label{table1}
\end{table}

\paragraph{IKKT matrix model at large $N$.} The IKKT matrix model with the action \cite{Ishibashi:1996xs}
\begin{align}\label{IKKTSO(1,9)}
    S=\Tr\Big([Y^{\boldsymbol{I}},Y^{\boldsymbol{J}}][Y_{\boldsymbol{I}},Y_{\boldsymbol{J}}]+\Psi_{\cA}(\Gamma^{\boldsymbol{I}})^{\cA\cB}[Y_{\boldsymbol{I}},\Psi_{\cB}]+2m^2Y^{\Ibold}Y_{\Ibold}\Big)\,,\qquad {\boldsymbol{I}}=0,1,\ldots,9\,
\end{align}
is invariant under $\delta Y^{\boldsymbol{I}}=U^{-1}Y^{\boldsymbol{I}}U$ and $\delta \Psi_{\cB}=U^{-1}\Psi_{\cB}U$ for any unitary matrix $U$, where $Y^{\boldsymbol{I}}$ are $N\times N$ Hermitian matrices and $\Psi_{\cB}$ are $\mso(1,9)$-valued Majorana-Weyl spinors. Here, we have added a small mass term by hand to set a scale for the model, and to stabilize the background brane $\cM\xhookrightarrow{} \R^{1,9}$ as a classical solution. Then the commutators of 
$Y^{\Ibold}$ 
\begin{align}\label{Poisson1}
    [Y^{\boldI},Y^{\boldJ}]:=\im\,\theta^{\boldI \boldJ}\,,
\end{align}
encode a Poisson structure
$\theta^{\Ibold\boldJ}$ on the brane $\cM\xhookrightarrow{}\R^{1,9}$. 

We denote $\cH$ as the Hilbert space on which these matrices act. Then, there exist certain localized quasi-coherent states $|\zeta\rangle\in \cH$ such that the expectation values $y^{\Ibold}:=\langle \zeta |Y^{\Ibold}|\zeta \rangle$ provide coordinate functions on $\cM$ in the semi-classical limit where $N$ is large. In this regime, the commutators can be replaced by Poisson brackets 
\begin{align}
 [Y^{\Ibold},Y^{\boldJ}]\quad \mapsto   \quad \{y^{\boldI},y^{\boldJ}\}:=\theta^{\boldI\boldJ}\,.
\end{align}
Note that the pair $(C^{\infty}(\cM,\R),\{\,,\})$ defines a Poisson algebra for the IKKT matrix model.

\paragraph{$SO(1,3)$-invariant spacetime $\cM^{1,3}$.} 
Following \cite{Govil:2013uta}, it is useful to organize the $\mso(2,4)$ generators $m_{\tA\tB}$ cf. \eqref{so(2,4)-Alg} where $\tA,\tB=0,1,\ldots,5$ as follows
\begin{subequations}
    \begin{align}
    m_{45}:&=D\qquad \qquad \ \  (\text{the dilation})\,,\\
    y_a:&=\ell_pm_{a5}\,,\qquad a=0,1,2,3,4\,,\\
    t_{\hat a}:&=\frac{1}{R}m_{\hat a 4}\,,\qquad \hat a=0,1,2,3,5\,.
\end{align}
\end{subequations}
Here $y^a$ describes a 4-dimensional hyperboloid $H^4$ with radius $R$, and $\ell_p$ is a natural length scale. The generators $y^\mu$ for $\mu=0,...,3$ will be interpreted as Cartesian coordinate functions on  $\cM^{1,3}$, which will be understood as cosmological space-time.
The ``momentum'' generators $t_{\hat a}$ transform as vectors under the $SO(2,3)$ subgroup of $SO(2,4)$ with generators $m^{\hat a\hat b}$, and $\ell_pR \,t_5=-y_4$. 
They satisfy the following relations:\footnote{See e.g. \cite{Steinacker:2023zrb} or Appendix \ref{app:C} for a short review.}
\begin{subequations}
\label{M31-rlations}
\begin{align}
    \{t^{\hat a},t^{\hat b}\}&=\frac{1}{R^2}m^{\hat a \hat b}\,,\qquad \qquad \qquad \qquad \quad \quad \qquad \qquad \, \,  \hat a,\hat b=0,1,2,3,5\,,\label{eq:Phatcommutator}\\
    \eta_{\hat{a}\hat{b}}t^{\hat{a}}t^{\hat{b}}&=-t_{0}^2+t_mt^m-t_{5}^2=\frac{1}{\ell_p^2}\,,\qquad \qquad \qquad \qquad \ \, m = 1,2,3\,,\label{Psphere}\\
    y_{\hat a}t^{\hat a}&=0=y_{\mu}t^{\mu}\,,\qquad \qquad \qquad \qquad \quad \quad \qquad \qquad \ \,  \mu=0,1,2,3\,, \label{eq:orthogonalofPY}\\
    \eta_{\mu\nu}t^{\mu}t^{\nu}&=\frac{1}{\ell_p^2}+\frac{y_4^2}{\ell_p^2R^2}=+\ell_p^{-2}\cosh^2(\tau)\,, \label{S2sphereM13}\\
   y^\mu y_\mu &= - R^2 \cosh^2(\tau) 
   \label{yy-square} \\ 
     \{t^{\mu},y^{\nu}\}&=+\frac{\eta^{\mu\nu}}{R}y^4=\eta^{\mu\nu}\sinh(\tau)\,,\qquad \qquad \qquad \qquad \, y_4:=R\sinh (\tau)\,,\label{eq:PoissonM13}\\
    \{t^{\mu},y^4\}&=-\frac{y^{\mu}}{R}\,,\\
    m^{\mu\nu}&= R^2 \{t^\mu,t^\nu\}   =\frac{1}{\cosh^2(\tau)}\Big(\sinh(\tau)(y^{\mu}t^{\nu}-y^{\nu}t^{\mu})+\eps^{\mu\nu\sigma\rho}y_{\sigma}t_{\rho}\Big)\label{mgenerator}\,.
\end{align}
\end{subequations}
where $\eta_{\hat{a}\hat b}=\diag(-,+,+,+,-)$ and $\eta_{\mu\nu}=\diag(-,+,+,+)$. 
The parameter $\tau$ can be interpreted as a time parameter in a $k=-1$ cosmological FLRW spacetime, with the $SO(1,3)$-invariant time-like vector field
\begin{align}\label{timelikeT}
    \cY:=y^{\mu}\p_{\mu}\,
\end{align}
and a cosmic scale parameter $a(t)$; for details of the geometry, we refer the reader to \cite{Sperling:2019xar}.

\paragraph{Background spacetime $\cM^{1,3}$ and Poisson action.} 

The above $t^\mu$ generators can be used to define an $SO(1,3)$-invariant cosmological spacetime within the matrix model.
This is realized by the following background solution of the model \eqref{IKKTSO(1,9)}
\begin{align}\label{BG}
    y^{\Ibold}=\binom{\ttb^{\mu}}{{0}}\,,\qquad \mu=0,1,2,3\,,
\end{align}
where the IR mass $m^2 = \frac 3{R^2}$
sets the scale for the model \cite{Steinacker:2023myp}.
We will always consider the semi-classical limit or regime, where all matrices are treated as functions and commutators as Poisson brackets.
The action \eqref{IKKTSO(1,9)} then defines a gauge theory on this background 
by adding fluctuations:
\begin{align}\label{BG-fields}
    y^{\Ibold}=\binom{\ttb^{\mu}}{0}+\binom{\sa^{\mu}}{\phi^{\ib}}\,,\qquad \mu=0,1,2,3\,,\qquad \ib=4,\ldots,9\,.
\end{align}
Here
\begin{align}\label{eq:higher-spin-in-t}
    \sa_{\mu}(y|\ttb)=\sum_s\cA_{\nu(s)|\mu}(y)\ttb^{\nu(s)}\,
\end{align}
can be identified as $\hs$-valued gauge fields with the infinitesimal gauge transformation arising from $y^{\Ibold} \to y^{\Ibold} + \{\xi,y^{\Ibold}\}$.
The action of Lorentzian \hsikkt takes the explicit form  \cite{Steinacker:2023zrb}:
\begin{align}\label{FLRWaction}
    S=\int &\mho\,\Big(\frac 12 \{\ttb^{\mu},\sa^{\nu}\}\{\ttb_{\mu},\sa_{\nu}\}+\{\ttb_{\mu},\ttb_{\nu}\}\{\sa^{\mu},\sa^{\nu}\}+\frac{1}{2}\{\ttb^{\mu},\sa^{\mu}\}^2+\frac{1}{2}\{\ttb^{\mu},\phi^{\ib}\}\{\ttb_{\mu},\phi_{\ib}\}\nn\\
    &+\{\ttb^{\mu},\sa^{\nu}\}\{\sa_{\mu},\sa_{\nu}\}+\{\ttb^{\mu},\phi^{\ib}\}\{\sa_{\mu},\phi_{\ib}\}\nn\\
    &+\frac{1}{4}\{\sa^{\mu},\sa^{\nu}\}\{\sa_{\mu},\sa_{\nu}\}+\frac{1}{2}\{\sa^{\mu},\phi^{\ib}\}\{\sa_{\mu},\phi_{\ib}\}+\frac{1}{4}\{\phi^{\ib},\phi^{\jb}\}\{\phi_{\ib},\phi_{\jb}\}+\ldots\Big)+S_{\BG}
\end{align}
 in terms of Poisson brackets,
where $S_{\BG}$ denotes background action with terms that are zeroth or first orders in fields, and the ellipses denote fermionic terms that we will not consider any further in this paper. The symplectic form $\mho$ takes the form:
\begin{align}
    \mho:=\tK\,\frac{R}{y^4}dy^0dy^1dy^2dy^3=\tK\,(\sinh(\tau))^{-1}d^4y\,,
\end{align}
where $\tK$ is the invariant symplectic form on the internal space-like $S^2$ fiber over $\cM^{1,3}$. The details of this measure will be discussed after introducing the notion of Weyl spinors. Note that \eqref{FLRWaction} is the action that we wish to find a global spinorial description for.

\paragraph{Comments on previous work.} The construction of the $SU(2,2)$ doubleton minimal representation underlying the above construction suggests an organization in terms of the 
compact subgroup $SU(2)_L\times SU(2)_R$ of $SU(2,2)$, as reviewed in Appendix \ref{app:C}. However,
the associated spinors are not very useful for the Lorentzian theory, because $SU(2)_L\times SU(2)_R$ is broken by the background to $SU(2)_{\rm diag}$. This severely restricts the power of the spinor formalism. 
We will overcome this issue by considering Weyl spinors that transform under the space-like isometry group $SL(2,\C)$.

\section{Twistorial realization of 
\texorpdfstring{$\cM^{1,3}$}{M13}} \label{sec:3}
This section provides a spinorial and twistorial realization of the space-time $\cM^{1,3}$. This is non-standard due to the lack of manifest local Lorentz invariance: the FLRW geometry only admits a space-like $SO(1,3) \cong SL(2,\C)$ symmetry. We, therefore, must develop a spinor formalism based on this space-like $SL(2,\C)$ isometry. Nevertheless, our space-like spinorial formalism will be as powerful as the usual one, and the reader should always keep in mind that our spinors are {\em not} the spinors of the local Lorentz group.

\subsection{Spinorial representation and non-compact twistor space}

\paragraph{Chiral basis.} The spinorial representation of $\mso(2,4) \cong \msu(2,2)$ can be defined using the $SO(1,3)$ gamma matrices $\gamma_{\mu}$ obeying the Clifford algebra endowed with the anti-commutators:
\begin{align}
    \{\gamma_{\mu},\gamma_{\nu}\}_+=-2\eta_{\mu\nu}\,,\qquad \mu,\nu=0,1,2,3\,.
\end{align}
To define Weyl spinors transforming under the space-like $SO(1,3)\simeq SL(2,\C)$, we adopt the following chiral basis:\footnote{Notice a slight difference with the chiral basis in \cite{Sperling:2018xrm} (see also \cite{Govil:2013uta}). Very roughly, we swap $\gamma_0$ for $\gamma_4$ so that all $\gamma_{\mu}$ are off-diagonal.}
\begin{align}
\label{chiralso(1,5)basis}
    (\gamma_0)^A{}_{B}=\begin{pmatrix}
     0 & \one \\
     \one & 0
    \end{pmatrix}\,,\quad (\gamma_m)^A{}_{B}=\begin{pmatrix}
     0 & (\sigma_{m})^{\alpha}{}_{\dot\beta}\\
     -(\sigma_m)^{\dot\alpha}{}_{\beta} &0
    \end{pmatrix}\,,\quad (\gamma_4)^A{}_{B}=-\im  \begin{pmatrix}
     -\one  & 0\\
     0 & \one
    \end{pmatrix}\,,
\end{align}
where $\sigma_m$ with $m=1,2,3$ are Pauli matrices, and $\gamma_4:=\,\gamma_0\gamma_1\gamma_2\gamma_3=-\gamma_4^{\dagger}$. Note that 
\begin{align}
    \gamma_a^{\dagger}=\gamma_0\gamma_a\gamma_0^{-1}=-\eta_{ab}\gamma^b\,,\qquad a=0,1,\ldots,4\,.
\end{align}
It is convenient to define the following $\Sigma$ matrices
\begin{align}\label{Sigma-basis}
    \Sigma_{\mu\nu}:=\frac{\im}{4}[\gamma_{\mu},\gamma_{\nu}]\,,\qquad \Sigma_{\mu4}:=\frac{\im }{2}\gamma_{\mu}\gamma_4\,,\qquad \Sigma_{\mu5}:=\frac{1}{2}\gamma_{\mu}\,,\qquad \Sigma_{45}:= \frac 12\gamma_4\,,
\end{align}
which satisfy the $\msu(2,2)$ algebra as well as the reality condition 
\begin{align}
     \label{Sigma-reality-4}
    \Sigma_{\tA\tB}^{\dagger}=\gamma_0 \Sigma_{\tA\tB}\gamma_0^{-1}\,,\qquad \tA,\tB=0,1,\ldots,5\, .
\end{align}
They will be used to construct twistor space and $\cM^{1,3}$ via the Hopf maps defined below. The relevant unitary representations of $SO(2,4)$ are reviewed in Appendix \ref{app:C}.

\paragraph{Twistor space.} The non-compact complex 3-dimensional projective twistor space $\P^{1,2}$ can be defined as a quotient of $H^{3,4}$ (cf. \eqref{eq:H34}) by a $U(1)$ factor, i.e.
\begin{align}
    \P^{1,2}:=H^{3,4}/U(1)\, .
\end{align}
Here $H^{3,4}$ is defined as
\begin{align}\label{eq:H34}
    H^{3,4}\simeq H^7:=\{Z^A\in \C^4\,|\,\bar Z_AZ^A=N\} \ 
\end{align}
where $Z^A$ is an $\msu(2,2)$ vector and
\begin{align}
    \bar{Z}_A:=(Z^B)^{\dagger}(\gamma_0)^B{}_A
\end{align}
is its Dirac conjugate. 
Since $SU(2,2)$ acts transitively on $\P^{1,2}$, we can also view $\P^{1,2}$ as the coset of $SU(2,2)$ modulo the isotropy subgroup,
\begin{align}
    \P^{1,2}\simeq \frac{SU(2,2)}{SU(1,2)\times U(1)}\,.
\end{align}
Remarkably, the above definition of $\P^{1,2}$ coincides with the 6-dimensional co-adjoint orbit of $SU(2,2)\simeq SO(2,4)$. We can therefore identify 
the algebra generated by $y^\mu$ and $\ttb_\mu$ with the  algebra of functions on $\P^{1,2}$. To make this explicit, let us define an  $SU(2,2)$-invariant number operator as
\begin{align}\label{eq:numberoperator}
    \hat\cN:= \bar Z_A Z^A
    =N=\frac{2R}{\ell_p}
    \,,\qquad A=1,2,3,4\, 
\end{align}
anticipating the constraint \eqref{constraint-N} resulting from unitarity of the doubleton representations, cf. Appendix \ref{app:C}. 
Next, we shall equip the twistor space with an $SU(2,2)$--invariant
Poisson structure is given by
\begin{align}\label{eq:commZ}
    \{Z^A,\bar{Z}_B\}= -\im \delta^A{}_B\qquad \text{i.e.}\qquad \{Z^A,\bar Z^B\}=+\im C^{AB}\,,
\end{align}
where $C^{AB}$
is the $\mso(1,4)$-invariant matrix
, cf. Appendix \ref{app:C}. 
Then the algebra of functions on $\P^{1,2}$ can be defined as
\begin{equation}\label{fuzzyP12}
    \begin{split}
    C^{\infty}(\P^{1,2}_N):=\Big\{f(Z,\bar{Z})\in \C[Z^A,\bar{Z}_B]\mod U(1)\ \Big|\ \{\hat\cN,f(Z,\bar{Z})\}=0 \Big\}\, .
    \end{split}
\end{equation}
Observe that $\hat \cN$ 
defines a gradation on the Poisson algebra generated by the $Z^A$ via
\begin{align}\label{eq:gradingZ}
    \{\hat{\cN},Z^A\}=+\im Z^A\,,\qquad \{\hat{\cN},\bar{Z}_A\}=-\im \bar{Z}_A\,.
\end{align}
Hence $C^{\infty}(\P^{1,2})$ consists of polynomial functions that have equal numbers of $Z$'s and $\bar Z$'s.

\paragraph{Weyl spinors.} Let us now parametrize the $\msu(2,2)$-vectors $Z^A$, which are homogeneous coordinates on $\P^{1,2}$, in terms of two spinors transforming in the fundamental of the non-compact space-like isometry subgroup $SL(2,\C)\subset SU(2,2)$
\begin{align}
    Z^A=\binom{\lambda^{\alpha}}{\mu^{\dot\alpha}}\,,\qquad\quad \text{hence}\qquad \bar Z_A=(\bar\mu_{\alpha},\bar\lambda_{\dot\alpha})\,.
\end{align}
In contrast with the compact case considered in \cite{Sperling:2018xrm,Steinacker:2023zrb} based on $SU(2)_L\times SU(2)_R$, all spinors associated with the chiral basis \eqref{chiralso(1,5)basis} are now \emph{Weyl spinors} of $SL(2,\C)$. In terms of these Weyl spinors, the number operator $\hat\cN$ reads
\begin{align}\label{eq:Ninspinors}
     \hat\cN=\langle \lambda\,\bar\mu\rangle+[\mu\,\bar\lambda]=N\,,
\end{align}
where $\langle u \,v\rangle := u^{\alpha}v_{\alpha}\,,\ [u\,v]:=u^{\dot\alpha}v_{\dot\alpha}$. The non-vanishing Poisson brackets of the spinors take the form 
\begin{align}\label{eq:Poissonspinors}
    \{\lambda^{\alpha},\bar\mu^{\beta}\}=+\im\eps^{\alpha\beta}\,,\qquad \{\mu^{\dot\alpha},\bar\lambda^{\dot\beta}\}=+\im\eps^{\dot\alpha\dot\beta}\,,
\end{align}
and 
\begin{align}
    \{\cN,\lambda^{\alpha}\}=+\im\lambda^{\alpha}\,,\quad \{\cN,\mu^{\dot\alpha}\}=+\im\mu^{\dot\alpha}\,,\quad \{\cN,\bar\mu_{\alpha}\}=-\im\bar\mu_{\alpha}\,,\quad \{\cN,\bar\lambda_{\dot\alpha}\}=-\im\bar\lambda_{\dot\alpha}\,.
\end{align}
The algebra of functions $C^{\infty}(\P^{1,2})$ reads in terms of Weyl spinors as follows:
\begin{align}\label{operator-algebra-1}
    C^{\infty}(\P^{1,2})=\sum_{n+m=p+q}f_{\alpha(n)\dot\alpha(m)}{}^{\beta(p)\dot\beta(q)}\lambda^{\alpha(n)}\mu^{\dot\alpha(m)}\bar{\mu}_{\beta(p)}\bar{\lambda}_{\dot\beta(q)}\,.
\end{align}

\paragraph{Hopf map from $\P^{1,2}$ to $H^4$.} Using the above data, we can then define generators $m_{\tA\tB}$ of $SO(2,4)\simeq SU(2,2)$ via the Jordan-Schwinger construction as
\begin{align}\label{m-ZZ}
    m_{\tA\tB}:=\bar{Z}_A(\Sigma_{\tA\tB})^A{}_BZ^B\,,\qquad \tA,\tB=0,1,\ldots,5\, .
\end{align}
Note that these are hermitian due to \eqref{Sigma-reality-4}, and hence can be used to define a unitary representation cf. Appendix \ref{app:C}. From \eqref{m-ZZ} 
we can define the following hermitian generators
\begin{subequations}\label{Hopf-Z}
    \begin{align}
    D:&= \frac 12\bar Z_A(\gamma_4)^A{}_BZ^B\, = \frac{1}{\ell_p} y_4   ,\\
    y_a:&=\frac{\ell_p}{2}\bar{Z}_A(\gamma_a)^A{}_BZ^B\,,\qquad\qquad\qquad\qquad\qquad\ \ \  a=0,1,2,3,4\,,\\
    \ttb_{\mu}:&=\frac{1}{R}\bar{Z}_A(\Sigma_{\mu 4})^A{}_BZ^B \ 
    = \frac{1}{R}\bar{Z}_A(\underline{\gamma}_\mu)^A{}_BZ^B 
    \,,\qquad \mu=0,1,2,3\,. \label{eq:Hopf-t}
\end{align}
\end{subequations}
In particular, we can interpret $y^a$ with $y_a y^a = - R^2$ geometrically in terms of the Hopf map 
\begin{align}
y_a:\quad H^{3,4} &\rightarrow H^4 \ \subset \R^{1,4} \, .
\end{align}
The gamma matrices $\underline{\gamma}_{\mu} :=\Sigma_{\mu 4}$ are given explicitly by
\begin{align}\label{gammaSO(1,3)}
    (\underline{\gamma}_0)^{A}_{\ B}=\frac 12 \begin{pmatrix}
     0 &\one_2\\
     -\one_2 & 0
    \end{pmatrix}\,,\quad  (\underline{\gamma}_{m})^A_{\ B}\equiv (\underline{\gamma}_m)^A_{\ B}=\frac 12  \begin{pmatrix}
     0 &\sigma_m\\
     \sigma_m & 0
    \end{pmatrix}\,,\qquad m=1,2,3\,
\end{align}
which can be subsumed as
\begin{align}
    (\underline{\gamma}_{\mu})^A{}_B=\frac{1}{2}\begin{pmatrix}
        0& \sigma_{\mu}\\
        -\bar\sigma_{\mu} &0
    \end{pmatrix}\,,\qquad \sigma_{\mu}=(1,\sigma_m)\,,\qquad \bar\sigma_{\mu}=(1,-\sigma_m)\,.
\end{align}

\paragraph{Hopf map from $\P^{1,2}$ to $\cM^{1,3}$.} We now illustrate how our new construction terms of Weyl spinors lead to a global description of coordinate functions $\ttb_\mu$ and $y^\mu$. The Hopf maps \eqref{Hopf-Z} relating $(Z,\bar Z)$ with the triplet $(D,y_{\mu},\ttb_{\mu})$ in $3+1$ dimensions can be written as
\begin{subequations}\label{Hopf-Weyl}
    \begin{align}
    D&=\frac{1}{2}\bar Z_A(\gamma_4)^A{}_BZ^B=-\frac{\im}{2}\Big(-\langle \lambda\,\bar\mu\rangle + [\mu\,\bar\lambda]\Big)\,,\label{D-Weyl-1}\\
    y_\mu&=\frac{\ell_p}{2}\bar{Z}_A(\gamma_\mu)^A{}_BZ^B=\frac{\ell_p}{2}\Big(\bar\mu_{\alpha}(\sigma_{\mu})^{\alpha}{}_{\dot\alpha}\mu^{\dot\alpha}+\bar\lambda_{\dot\alpha}(\bar\sigma_{\mu})^{\dot\alpha}{}_{\alpha}\lambda^{\alpha}\Big)\,,\label{y-pre-incidence}\\
    \ttb_{\mu}&=\frac{1}{R}\bar{Z}_A(\Sigma_{\mu 4})^A{}_BZ^B=\frac{1}{2R}\Big(\bar\mu_{\alpha}(\sigma_{\mu})^{\alpha}{}_{\dot\alpha}\mu^{\dot\alpha}-\bar\lambda_{\dot\alpha}(\bar\sigma_{\mu})^{\dot\alpha}{}_{\alpha}\lambda^{\alpha}\Big)\,, \label{t-pre-incidence} 
\end{align}
\end{subequations}
 recalling $\sigma_{\mu}=(\one,\sigma_m)$ and $\bar\sigma_{\mu}=(\one,-\sigma_m)$ for $m=1,2,3$. 
 It is not hard to verify that these generators satisfy all the relations \eqref{M31-rlations}.
 These expressions provide a definition of 
$\ttb_{\mu}$ in terms of Weyl spinors transforming under the space-like $SO(1,3)$ isometry, which should not be confused with the local Lorentz group. 
For later use, we also note the following spinorial expression the ``missing" generator $y^4$:
\begin{align}
\label{y4-spinor}
    y_4=\frac{\ell_p}{2}\bar{Z}_A(\gamma_4)^A{}_BZ^B=-\im\frac{\ell_p}{2}\Big(-\langle \lambda\,\bar\mu\rangle +[\mu\,\bar\lambda]\Big)=\frac{N\ell_p}{2}\sinh(\tau)=R\sinh(\tau)\, .
\end{align}


\paragraph{Incidence relations.} 
We can now eliminate all right-handed Weyl spinors $\mu^{\dot\alpha}$ and $\bar\lambda_{\dot\alpha}$ using the incidence relations as in \cite{kodaira1962theorem,Penrose:1967wn}. These incidence relations can be derived directly from our definition of the Hopf maps. In particular, from \eqref{y-pre-incidence} and \eqref{t-pre-incidence} we obtain
\begin{subequations}\label{y-t-spinors}
    \begin{align}
    y^{\alpha\dot\alpha}:&=y^{\mu}\sigma_{\mu}^{\alpha\dot\alpha}=+\ell_p(\lambda^{\alpha}\bar\lambda^{\dot\alpha}-\bar\mu^{\alpha}\mu^{\dot\alpha})\,,\\
    \ttb^{\alpha\dot\alpha}:&=\ttb^{\mu}\sigma_{\mu}^{\alpha\dot\alpha}=-\frac{1}{R}(\lambda^{\alpha}\bar\lambda^{\dot\alpha}+\bar\mu^{\alpha}\mu^{\dot\alpha})\,,
\end{align}
\end{subequations}
where $\sigma_{\mu}^{\alpha\dot\alpha}\sigma^{\mu}_{\beta\dot\beta}=2\epsilon^{\alpha}{}_{\beta}\epsilon^{\dot\alpha}{}_{\dot\beta}$ and $\bar\sigma_{\mu}^{\alpha\dot\alpha}\sigma^{\mu}_{\beta\dot\beta}=-2\epsilon^{\alpha}{}_{\beta}\epsilon^{\dot\alpha}{}_{\dot\beta}$ 
(where $\bar{\sigma}^{\dot\alpha\alpha}\equiv\sigma^{\alpha\dot\alpha}$). 
It can be checked that these satisfy the reality conditions
\begin{align}
\label{y-t-spinor-real}
    (y^{\alpha\dot\beta})^* = y_{\beta\dot\alpha}\,, 
    \qquad  (\ttb^{\alpha\dot\beta})^* = \ttb_{\beta\dot\alpha}\,.
\end{align}
Furthermore, we can solve these for $(\mu,\bar\lambda)$ as
\begin{subequations}\label{incidence-relations}
    \begin{align}
    \mu^{\dot\alpha}&=\frac{1}{\ell_p\langle \lambda\,\bar\mu\rangle}y^{\alpha\dot\alpha}\lambda_{\alpha}=+\frac{R}{\langle \lambda\,\bar\mu\rangle}\ttb^{\alpha\dot\alpha}\lambda_{\alpha}\,,\\
    \bar\lambda^{\dot\alpha}&=\frac{1}{\ell_p\langle \lambda\,\bar\mu\rangle }y^{\alpha\dot\alpha}\bar\mu_{\alpha}=-\frac{R}{\langle \lambda\,\bar\mu\rangle }\ttb^{\alpha\dot\alpha}\bar\mu_{\alpha}\,.
\end{align}
\end{subequations}
Using these incidence relations, we can recast the Poisson algebra \eqref{operator-algebra-1} as 
\begin{equation}\label{operator-algebra-2}
    \begin{split}
    C^{\infty}(\P^{1,2}):=\Big\{f(y|\lambda,\bar{\mu})\in \C[\lambda,\bar{\mu}]\ \Big|\ \{\hat\cN,f(y|\lambda,\bar{\mu})\}=0 \Big\}\,.
    \end{split}
\end{equation}
In order words, 
\begin{align}
    C^{\infty}(\P^{1,2})=\sum_{s}f_{\alpha(s)}{}^{\beta(s)}(y)\,\lambda^{\alpha(s)}\bar\mu_{\beta(s)}\,.
\end{align}
Note that the coefficients $f_{\alpha(s)}{}^{\beta(s)}$ have {\em only un-dotted indices}, which means that they are chiral field representations. 

\paragraph{Remark.} We observe that although the twistor action of Lorentzian \hsikkt is real, its spacetime fields (the coefficients $f$'s after integrating out fiber coordinates cf. Appendix \ref{app:A}) will typically be complex. This is not in contradiction with unitarity or hermiticity of the model;
for example, GR with a cosmological constant can also be formulated in terms of chiral field representations; see e.g. \cite{Krasnov:2016emc}.\footnote{See also \cite{Basile:2022mif} for the construction of partially massless chiral higher-spin theories using chiral field representations.}

\subsection{Bundle structure of \texorpdfstring{$\cM^{1,3}$}{M13}}\label{sec:bundle}
Recall that $\P^{1,2}$ can be understood as $SO(1,4)$-equivariant bundle over $H^4$ cf. \cite{Sperling:2018xrm} via the Hopf map 
\begin{align}
    \pi:\P^{1,2}\simeq H^{3,4}/U(1)\rightarrow H^4
\end{align}
which is compatible with $SO(1,4)$.
We can choose a reference point $\tp\in H^4$
 using a suitable $SO(1,3)$ transformation such that
\begin{align}
    y^a\big|_{\tp}=\big(R \cosh(\tau),0,0,0,R\sinh(\tau)\big)\,.
\end{align}
The stabilizer of $\tp$ is $SO(3)$, which is part of the isometry group $SO(1,3)$ on $\cM^{1,3}$. 

Now consider the twistor $Z^A$.
Upon applying an appropriate $SL(2,\C)$ transformation, we can map $Z^A$ to a reference twistor $\mathtt{Z}^A$ over $\tp$ where 
\begin{align}
   Z^A\mapsto \mathtt{Z}^A:=Z^A\big|_{\tp}=\begin{pmatrix}
        \lambda \\ \mu 
    \end{pmatrix}\, \quad \mbox
    {with} \ \ \mu = e^{\im \vartheta }\lambda \ ,
\end{align}
i.e. the two Weyl spinors $\lambda$ and $\mu$ coincide up to a phase. This follows by observing that $y^i=0$ for $i=1,2,3$ and $\ttb_0 = 0$ at the reference point $\tp$, which translates into
\begin{subequations}
    \begin{align}
    y_i\big|_{\tp} &=\frac{\ell_p}{2}\Big(\bar\mu_{\alpha}(\sigma_{i})^{\alpha}{}_{\dot\alpha}\mu^{\dot\alpha}-\bar\lambda_{\dot\alpha}(\sigma_i)^{\dot\alpha}{}_{\alpha}\lambda^{\alpha}\Big) = 0 \,,\\ 
   \ttb_0\big|_{\tp} &=\frac{\ell_p}{2}\Big(\bar\mu_{\alpha}(\sigma_{0})^{\alpha}{}_{\dot\alpha}\mu^{\dot\alpha} -\bar\lambda_{\dot\alpha}(\sigma_0)^{\dot\alpha}{}_{\alpha}\lambda^{\alpha}\Big) = 0\,.  
\end{align}
\end{subequations}
\normalsize
where we recall that $\sigma_{\mu}=(\one,\sigma_m)$ and $\bar\sigma_{\mu}=(\one,-\sigma_m)$. 
Hence, $\mu = e^{\im  \vartheta}\lambda$ as claimed. This 
exhibits the structure of $\P^{1,2}$ as a $S^2$-bundle over $\cM^{1,3}$ in spinorial language, complementing the vectorial description in \eqref{eq:orthogonalofPY} and \eqref{S2sphereM13}. 



\section{Spinorial representation of the Lorentzian \texorpdfstring{\hsikkt}{hsikkt} theory}\label{sec:SO(1,3)spinors-IKKT}

This section studies the higher-spin gauge theory on $\cM^{1,3}$ defined by the matrix background \eqref{BG-fields}. First, we show that the effective metric can be nicely expressed in terms of Weyl spinors. This allows us to derive the spinorial representation for the spacetime action of Lorentzian \hsikkt theory on $\cM^{1,3}$ in the flat limit.


\subsection{Spinorial background geometry}

We now consider the background $\ttb^{\mu}$ cf. \eqref{BG-fields}
of the matrix model, and elaborate on the associated frame and metric in the present spinor formalism. 

\paragraph{Effective vielbein and metric.} We have shown that $(\ttb^{\mu},y^{\nu})$ can be defined globally in terms of Weyl spinors, cf. \eqref{Hopf-Weyl}. This allows to compute the following background derivative
\begin{align}\label{derivation-def}
    \{\ttb^{\mu},\varphi(y|\lambda,\bar\mu)\}&=\Big(\{\ttb^{\mu},y^{\nu}\}\frac{\p}{\p y^{\nu}}+\{\ttb^{\mu},\lambda^{\alpha}\}\frac{\partial}{\p \lambda^{\alpha}}+\{\ttb^{\mu},\bar\mu^{\alpha}\}\frac{\partial}{\p \bar\mu^{\alpha}}\Big)\varphi(y|\lambda,\bar\mu)\nn\\
    &=:\Big(E^{\mu\nu}\frac{\p}{\p y^{\nu}}+E^{\mu|\alpha}\frac{\p}{\p \lambda^{\alpha}}+\bar{E}^{\mu|\alpha}\frac{\p}{\p\bar\mu^{\alpha}}\Big)\varphi(y|\lambda,\bar\mu)\,,
\end{align}
for any $\hs$-valued function $\varphi(y|\lambda,\bar\mu)\in C^{\infty}(\P^{1,2})$. Here the symbols $E^{\mu|\bullet}$ are referred to as the effective vielbein. 
Using the Hopf maps \eqref{Hopf-Weyl}, one can show that 
\begin{align}
    \{\ttb^{\mu},y^{\nu}\}&=-\frac{\ell_p}{4R}\Big[\big\{\bar\lambda_{\dot\alpha}(\bar\sigma^{\mu})^{\dot\alpha}{}_{\beta}\lambda^{\beta},\bar\mu_{\gamma}(\sigma^{\nu})^{\gamma}{}_{\dot\delta}\mu^{\dot\delta}\big\}+\big\{\bar\lambda_{\dot\gamma}(\bar\sigma^{\nu})^{\dot\gamma}{}_{\delta}\lambda^{\delta},\bar\mu_{\alpha}(\sigma^{\mu})^{\alpha}{}_{\dot\beta}\mu^{\dot\beta}\big\}\Big]\nn\\
    &=+\im \frac{\ell_p}{4R}\Big(\lambda^{\alpha}\big[(\bar\sigma^{\mu})_{\dot\alpha\alpha}(\sigma^{\nu})^{\gamma\dot\alpha}+(\bar\sigma^{\nu})_{\dot\alpha\alpha}(\sigma^{\mu})^{\gamma\dot\alpha}\big]\bar\mu_{\gamma}\nn\\
    &\qquad \qquad \qquad \qquad \qquad \qquad -\bar\lambda_{\dot\alpha}\big[(\bar\sigma^{\mu})^{\dot\alpha\gamma}(\sigma^{\nu})_{\gamma\dot\gamma}+(\bar\sigma^{\nu})^{\dot\alpha\gamma}(\sigma^{\mu})_{\gamma\dot\gamma}\big]\mu^{\dot\gamma}\Big)\nn\\
    &=-\im \frac{\ell_p}{2R}\big([\mu\,\bar\lambda]-\langle \lambda\,\bar\mu\rangle \big)\eta^{\mu\nu}\,,
\end{align}
\normalsize
where we have used 
\begin{subequations}
\begin{align}
    (\sigma^{\mu})_{\alpha\dot\beta}(\bar\sigma^{\nu})^{\dot\beta\gamma}+(\sigma^{\nu})_{\alpha\dot\beta}(\bar\sigma^{\mu})^{\dot\beta\gamma}=-2\eta^{\mu\nu}\epsilon_{\alpha}{}^{\gamma}\,,\\
    (\bar\sigma^{\mu})_{\dot\alpha\gamma}(\sigma^{\nu})^{\gamma\dot\beta}+(\bar\sigma^{\nu})_{\dot\alpha\gamma}(\sigma^{\nu})^{\gamma\dot\beta}=-2\eta^{\mu\nu}\epsilon_{\dot\alpha}{}^{\dot\beta}\,.
\end{align}
\end{subequations}
Since $\eta^{\mu\nu}(\sigma_{\mu})^{\alpha\dot\alpha}(\sigma_{\nu})^{\beta\dot\beta}=2\eps^{\alpha\beta}\eps^{\dot\alpha\dot\beta}$, we can also write the above as
\begin{align}
    \{\ttb^{\alpha\dot\alpha},y^{\beta\dot\beta}\}
    =-\im\frac{\ell_p}{R} \big([\mu\,\bar\lambda]-\langle \lambda\,\bar\mu\rangle \big)\eps^{\alpha\beta}\eps^{\dot\alpha\dot\beta}
=\sinh(\tau)\eps^{\alpha\beta}\eps^{\dot\alpha\dot\beta}\,,
\end{align}
which coincides with \eqref{eq:PoissonM13} since
\begin{align}\label{bracket-relations-1}
    -\im(-\langle \lambda\,\bar\mu\rangle+[\mu\,\bar\lambda])
    =D 
    =N\sinh(\tau)
\end{align}
using \eqref{y4-spinor}. Given that $\langle \lambda\,\bar\mu\rangle+[\mu\,\bar\lambda]=N$, cf. \eqref{eq:Ninspinors}, we have
\begin{align}\label{inner-prod-spinors}
    \langle \lambda\,\bar\mu\rangle =\frac{N}{2}\big[1-\im \sinh(\tau)\big]\,,\qquad [\mu\,\bar\lambda]=\frac{N}{2}\big[1+\im \sinh(\tau)\big]\,.
\end{align}
This reflects the space-like time-dependent nature of the internal $S^2$ that encodes higher-spin structures. 
In particular,
the effective metric can be read off by considering the following kinetic term 
\begin{align}\label{effectivemetricFLRW}
    \{\ttb^{\mu},\varphi(y)\}\{\ttb_{\mu},\varphi(y)\}:=\gamma^{\mu\nu}\p_{\mu}\varphi(y)\p_{\nu}\varphi(y)\,,\qquad \gamma^{\mu\nu}=\eta^{\mu\nu}\sinh^2(\tau)\,
\end{align}
with $\varphi(y)$ a scalar field,  thus recovering the geometry \cite{Sperling:2019xar}.
In terms of spinors, the effective metric reads\footnote{The full effective metric is obtained by taking into account an appropriate dilatation, cf. \cite{Steinacker:2019fcb}.}
\begin{align}\label{eq:eff-metric-spinor}
    \gamma^{\alpha\dot\alpha\beta\dot\beta}=2\epsilon^{\alpha\beta}\epsilon^{\dot\alpha\dot\beta}\sinh^2(\tau)\,.
\end{align}
As a result, we recovered the ``standard'' description for the $\Box:=\frac{1}{2}\p^{\alpha\dot\alpha}\p_{\alpha\dot\alpha}$ operator. This resolves the obstruction in \cite{Steinacker:2023zrb}  for a global description for Lorentzian \hsikkt based on $SU(2)_L\times SU(2)_R\subset SU(2,2)$.\footnote{Intriguingly, the Lorentzian twistor approach present here should also complement the standard twistor construction for higher-spin theories which relies heavily on spacetime with Euclidean or split signature, see e.g. \cite{Tran:2022tft,Herfray:2022prf,Adamo:2022lah}.}




\subsection{Higher-spin degrees of freedom}
Armed with the above organization of higher-spin fields based on $\P^{1,2}$, we can count their physical degrees of freedom. 
Since counting degrees of freedom for $\hs$-valued functions $\varphi(y|\lambda,\bar\mu)$ is straightforward, we focus 
 on the gauge potential $\sa_{\mu}$. 
 First, we can express the real $\hs$-valued gauge field $\sa_\mu^* = \sa_\mu$ in spinorial form $\sa_{\alpha\dot\beta} := \sa_{\mu}\sigma^{\mu}_{\alpha\dot\beta}$, which satisfies the reality condition
 \begin{align}
 \label{a-spinor-reality}
    (\sa_{\alpha\dot\beta})^* =   \sa^{\beta\dot\alpha} \ .
 \end{align}
For the purpose of computing interaction vertices, it is advantageous to write these spinorial gauge fields in a chiral form as in \eqref{operator-algebra-2}, which leads to the following mode decomposition:
\begin{align}\label{mode-expansion-t-lambda}
\sa_{\alpha\dot\alpha}=\sum_{s=0}\cA_{\beta(2s)|\alpha\dot\alpha}\lambda^{\beta(s)}\bar\mu^{\beta(s)}\,,\qquad \cA_{\beta(2s)|\alpha\dot\alpha}:=A_{(\beta(2s)\alpha)\,\dot\alpha}
+4s\,\eps_{\alpha\beta}\sA_{\beta(2s-1)\,\dot\alpha}\,,
\end{align}
with a total of $\sum_s 2(2s+2) + 2(2s) = \sum_s 4(2s+1)$ off-shell degrees of freedom,
as in the vectorial description. Here the coefficients $A_{\alpha(2s-1)\,\dot\alpha}$ and $\sA_{\alpha(2s-3)\,\dot\alpha}$ are tensorial fields that are in a sense chiral, since they have more un-dotted than dotted indices.
They also satisfy certain reality conditions resulting from \eqref{a-spinor-reality}, which we shall not write down explicitly.
Note that we have adjusted the prefactors of $\sA_{\beta(2s-1)\,\dot\alpha}$ to be $4s$ for later convenience.

\paragraph{Gauge fixing and dof.} 
As usual, we
 impose the gauge fixing condition for Yang-Mills matrix models \cite{Blaschke:2011qu}
\begin{align}
\cG(\sa) = \{\ttb_{\alpha\dot\alpha},\sa^{\alpha\dot\alpha}\}=0 \ 
 \label{gaugefix-intertwiner}
 \end{align}
to remove $2s+1$ dof,
and use the linearized gauge transformation
\begin{align}
    \delta_{\xi} \sa^{\mu}=\{\ttb^{\mu},\xi\}\,,\qquad \xi=\xi(y|\lambda,\bar\mu)\in C^{\infty}(\P^{1,2})
\end{align}
to remove further $2s+1$ on-shell pure gauge components\footnote{Note that $\hs$-valued gauge fields in \hsikkt are {\em not} divergent-free $\p^{\alpha\dot\alpha}\sa_{\alpha\dot\alpha} \neq 0$ a priori. 
}. This leaves us with $2(2s+1)$ physical dof for the ``would-be massive'' Lorentzian $\hs$-valued potential $\cA_{\beta(2s)|\alpha\dot\alpha}$. Hence, a gauge field with internal spin $s$ in \hsikkt carries $2(2s+1)$ degrees of freedom.
This falls in between a $4d$ massive higher-spin theory with $2s+1$ dof and a $4d$ conformal higher-spin gravity with $s(s+1)$ dof. However, we show below that after diagonalizing higher-spin modes and doing gauge-fixing in spacetime appropriately, we can get the correct degrees of freedom for a massive higher-spin field, which is $2s+1$.



\subsection{Spacetime action of Lorentzian \texorpdfstring{\hsikkt}{hsikkt} in the flat limit}\label{sec:action}
Having established all the necessary ingredients, we shall discuss the spacetime action of \hsikkt on $\cM^{1,3}$ in the local 4-dimensional regime, where the geometry can be considered as flat \cite{Steinacker:2023zrb}. We will focus only on the Yang-Mills (YM) part of the action.
 \paragraph{Local 4-dimensional regime.} 
To obtain a simple description of the local physics, we consider the local 4-dimensional regime following \cite{Steinacker:2023cuf}. This is the regime where the wavelength of a function $\varphi(y)$ is much shorter than the cosmic scale but much longer than the scale of non-commutativity. In that regime, we can treat any $\hs$-valued function as a $4d$ function, and the $\hs$ generators $\ttb$ as commutative variables. In particular, the local geometry is effectively flat. Then the contributions of the Poisson brackets acting on fiber coordinates $(\lambda,\bar\mu)$ can be neglected, i.e. we will drop terms that are proportional to $E^{\mu|\alpha}$ or $\bar{E}^{\mu|\alpha}$ cf. \eqref{derivation-def}. Then the spinorial form of the vielbein reduces to
\begin{align}
    E^{\mu\nu}=\eta^{\mu\nu}\sinh(\tau)\mapsto E^{\alpha\dot\alpha,\beta\dot\beta}= 2\epsilon^{\alpha\beta}\epsilon^{\dot\alpha\dot\beta}\sinh(\tau)\,,
\end{align}
and the following approximation of the Poisson brackets:
\begin{align}
\{f,g\} &\approx \theta^{\mu\nu} \del_\mu f\, \del_\nu g \ \propto \theta^{\alpha\dot\alpha\beta\dot\beta}\p_{\alpha\dot\alpha}f\,\p_{\beta\dot\beta}g
 \label{poisson-approx-x}
\end{align}
since $E^{\mu|\alpha}$ and $\bar{E}^{\mu|\alpha}$ are sub-leading in the flat limit. In terms of spinors, the Poisson structures read
\begin{align}\label{Poisson-spinor-1}
    \theta^{\mu\nu}\mapsto \theta^{\alpha\dot\alpha\,\beta\dot\beta}:=\{y^{\alpha\dot\alpha},y^{\beta\dot\beta}\}=-\im\ell_p^2\Big[\big(\lambda^{\alpha}\bar\mu^{\beta}+\lambda^{\beta}\bar\mu^{\alpha}\big)\epsilon^{\dot\alpha\dot\beta}+\big(\mu^{\dot\alpha}\bar\lambda^{\dot\beta}+\mu^{\dot\beta}\bar\lambda^{\dot\alpha}\big)\epsilon^{\alpha\beta}\Big]\,
\end{align}
which has both left-handed (un-dotted) and right-handed (dotted) Weyl spinors. Note that this Poisson structure can be simplified by using the following identities
\begin{align}
\label{epsilon-l-mubar}
    \lambda^{\alpha}\bar\mu^{\beta}-\lambda^{\beta}\bar\mu^{\alpha}=-\langle \lambda\,\bar\mu\rangle\epsilon^{\alpha\beta} \,,\qquad \mu^{\dot\alpha}\bar\lambda^{\dot\beta}-\mu^{\dot\beta}\bar\lambda^{\dot\alpha}=-[\mu\,\bar\lambda]\epsilon^{\dot\alpha\dot\beta}\,.
\end{align}
We obtain
\begin{align}
 \theta^{\alpha\dot\alpha\beta\dot\beta}
 = -\im\ell_p^2\Big(2\lambda^{\alpha}\bar\mu^{\beta}\eps^{\dot\alpha\dot\beta}
+2\mu^{\dot\alpha}\bar\lambda^{\dot\beta}\eps^{\alpha\beta}
+N\eps^{\alpha\beta}\epsilon^{\dot\alpha\dot\beta}\Big)\, \nn\\
\end{align}
using \eqref{inner-prod-spinors}.
At late times, this reduces to 
\begin{align}\label{theta-at-large-tau}
    \lim_{\tau\rightarrow\infty}\theta^{\alpha\dot\alpha\beta\dot\beta}=-2\im\,\ell_p^2\Big(\lambda^{\alpha}\bar\mu^{\beta}\eps^{\dot\alpha\dot\beta}+\mu^{\dot\alpha}\bar\lambda^{\dot\beta}\eps^{\alpha\beta}\Big)\,.
\end{align}
This is also justified by observing that the expression of $\theta^{\mu\nu}$ cf. \eqref{mgenerator} in the flat limit 
\begin{align}
    \theta^{\mu\nu}=-\ell_p^2m^{\mu\nu}\approx \frac{\ell_p^2\sinh(\tau)}{\cosh^2(\tau)}\big[y^{\mu}\ttb^{\nu}-y^{\nu}\ttb^{\mu}\big]\,,
\end{align}
which can be brought into the form \eqref{theta-at-large-tau} by contracting with $\sigma_{\mu}$'s matrices and take the late-time limit appropriately. 

Since the Poisson structure will appear in the cubic and quartic interactions on twistor space $\P^{1,2}\simeq \P^1\times \cM^{1,3}$, it is desirable to rewrite \eqref{Poisson-spinor-1} in terms of either left-handed or right-handed spinors only. This is achieved using the incidence relations \eqref{incidence-relations}, so that
\begin{align}\label{Poisson-theta-spinor}
    \theta^{\alpha\dot\alpha\beta\dot\beta}=-\im\,\ell_p^2\Big[\big(\lambda^{\alpha}\bar\mu^{\beta}+\lambda^{\beta}\bar\mu^{\alpha}\big)\epsilon^{\dot\alpha\dot\beta}+\frac{1}{\ell_p^2\langle \lambda\,\bar\mu\rangle^2}\big(y^{\gamma\dot\alpha}y^{\delta\dot\beta}+y^{\gamma\dot\beta}y^{\delta\dot\alpha}\big)\lambda_{\gamma}\bar\mu_{\delta}\epsilon^{\alpha\beta}\Big]\,.
\end{align}
In this form, the appearance of 
 explicitly Lorentz-violating structures\footnote{Recall that 
we use the organization in terms of the unbroken $SO(1,3)$ isometry group, while local Lorentz invariance is broken by the background.} 
 (notably the time-like vector field $\cY=y^{\mu}\p_{\mu}$ in the spacetime vertices of Lorentzian \hsikkt, cf. \cite{Steinacker:2023cuf}) is manifest.


\paragraph{Normalization.} To bring our spacetime expressions to the same  canonical normalization as in \cite{Steinacker:2023cuf}, we shall rescale 
\begin{align}\label{canonical-normalization}
    \lambda\mapsto \sqrt{R/\ell_p}e^{\frac{\tau}{2}}\,\lambda\sim \sqrt{\frac{N}{2}}e^{\frac{\tau}{2}}\lambda\,,\qquad \qquad \bar\mu \mapsto \sqrt{R/\ell_p}e^{\frac{\tau}{2}}\,\bar\mu\sim \sqrt{\frac{N}{2}}e^{\frac{\tau}{2}}\bar\mu\,,
\end{align}
such that $\langle \lambda\,\bar\mu\rangle=-1$ and $[\mu\,\bar\lambda]=+1$. This rescaling is suggestive, since $\lambda,\bar\mu$ may be thought of as ``square-root'' of the coordinates functions $y^a$. While this rescaling does not change the length dimension of spinors, it leads to
\begin{align}\label{eq:theta-normalized}
    \theta^{\alpha\dot\alpha\beta\dot\beta}&=-\im R\,\ell_pe^{\tau}\Big[(\lambda^{\alpha}\bar\mu^{\beta}+\lambda^{\beta}\bar\mu^{\alpha})\eps^{\dot\alpha\dot\beta}+(y^{\gamma\dot\alpha}y^{\delta\dot\beta}+y^{\gamma\dot\beta}y^{\delta\dot\alpha})\lambda_{\gamma}\bar\mu_{\delta}\eps^{\alpha\beta}\Big]\nn\\
    &\approx -\im L_{\rm NC}^2\Big[(\lambda^{\alpha}\bar\mu^{\beta}+\lambda^{\beta}\bar\mu^{\alpha})\eps^{\dot\alpha\dot\beta}+(y^{\gamma\dot\alpha}y^{\delta\dot\beta}+y^{\gamma\dot\beta}y^{\delta\dot\alpha})\lambda_{\gamma}\bar\mu_{\delta}\eps^{\alpha\beta}\Big]\,,
\end{align}
where $L_{\rm NC}=\sqrt{R\,\ell_p\cosh(\tau)}$ is the scale of non-commutativity, i.e. the scale set by the commutators of (higher-spin) fields in the semi-classical approximation. As a result, we obtain the same time-dependent coupling constants as in \cite{Steinacker:2023cuf}. 
Below, we exhibit the spacetime action of Lorentzian \hsikkt as well as the result for 3-point scattering amplitudes using this normalization. Note also that we will rescale
\begin{align}
    \theta\mapsto\frac{1}{\langle\lambda\,\bar\mu\rangle}\theta
\end{align}
to match the form of the integral over fiber coordinates $(\lambda,\bar\mu)$ of the internal 2-sphere $\P^1\sim S^2$ as explained in Appendix \ref{app:B}.

\paragraph{Twistor action.} For simplicity, we will consider only the bosonic part of the action \eqref{FLRWaction}. The spinorial description of the twistor action for Lorentzian \hsikkt is
\begin{align}\label{FLRWaction-Weyl}
    S=\int &\mho\,\Big(\frac 12 \{\ttb^{\alpha\dot\alpha},\sa^{\beta\dot\beta}\}\{\ttb_{\alpha\dot\alpha},\sa_{\beta\dot\beta}\}+\{\ttb_{\alpha\dot\alpha},\ttb_{\beta\dot\beta}\}\{\sa^{\alpha\dot\alpha},\sa^{\beta\dot\beta}\}+\frac{1}{2}\{\ttb_{\alpha\dot\alpha},\sa^{\alpha\dot\alpha}\}^2\nn\\
    &+\frac{1}{2}\{\ttb^{\alpha\dot\alpha},\phi^{\ib}\}\{\ttb_{\alpha\dot\alpha},\phi_{\ib}\}+\{\ttb^{\alpha\dot\alpha},\sa^{\beta\dot\beta}\}\{\sa_{\alpha\dot\alpha},\sa_{\beta\dot\beta}\}+\{\ttb^{\alpha\dot\alpha},\phi^{\ib}\}\{\sa_{\alpha\dot\alpha},\phi_{\ib}\}\nn\\
    &+\frac{1}{4}\{\sa^{\alpha\dot\alpha},\sa^{\beta\dot\beta}\}\{\sa_{\alpha\dot\alpha},\sa_{\beta\dot\beta}\}+\frac{1}{2}\{\sa^{\alpha\dot\alpha},\phi^{\ib}\}\{\sa_{\alpha\dot\alpha},\phi_{\ib}\}+\frac{1}{4}\{\phi^{\ib},\phi^{\jb}\}\{\phi_{\ib},\phi_{\jb}\}+\ldots\Big)\nn\\
    &+S_{\BG}\,.
\end{align}
Here, the integration is over twistor space with the symplectic measure 
\begin{align}\label{symplectic-measure-PT}
   \mho= \tK\, \rho_{\cM}\,d^4y\,,\qquad \tK:=\frac{\langle \lambda\,d\lambda\rangle \wedge \langle \bar\mu\,d\bar\mu\rangle }{\langle \lambda\,\bar\mu\rangle^2}\,,\qquad \rho_M:=\frac{1}{\sinh(\tau)}
\end{align}
with $\tK$ being the top form on $\P^1$ fiber space having the standard normalization\footnote{This is nothing but the first Chern class of $T\P^1$.}
\begin{align}
\label{K-normalization}
    \int_{\P^1}\tK=-2\pi \, .
\end{align}
Note that the term $\{\ttb_{\mu},\ttb_{\nu}\}\{\sa^{\mu},\sa^{\nu}\}$ is suppressed in the flat limit of the FLRW matrix model-like spacetime, where $R\rightarrow 
\infty$ and $\sinh(\tau)$ can be treated as a (large) constant at late time $\tau$ as explained in \cite{Steinacker:2023zrb}. Therefore, after adding a gauge fixing term $-\frac{1}{2}\{\ttb_{\alpha\dot\alpha},\sa^{\alpha\dot\alpha}\}^2$, the spacetime kinetic action will come solely from the term $\frac 12 \{\ttb^{\alpha\dot\alpha},\sa^{\beta\dot\beta}\}\{\ttb_{\alpha\dot\alpha},\sa_{\beta\dot\beta}\}$.

\paragraph{Spacetime action.} For concreteness as well as simplicity, we will only derive the spacetime action of the Yang-Mills sector in the Lorentzian \hsikkt in the flat limit. All technical issues are included in Appendix \ref{app:A}. 

$\symknight$ \underline{\emph{Kinetic action.}} Using the effective metric \eqref{effectivemetricFLRW}, we can write the gauge-fixed kinetic term explicitly as
\begin{align}
    S_2\approx \frac{1}{2}\int_{\P^{1,2}}d^4y\,\tK\, \sinh(\tau) \eta^{\mu\nu}\p_{\mu}\sa^{\rho}\p_{\nu}\sa_{\rho}\,,
\end{align}
which is equivalent to
\begin{align}
    S_2\approx -\int_{\P^{1,2}} \d^4y\,\tK\,\sinh(\tau)\sa^{\alpha\dot\alpha}\Box\sa_{\alpha\dot\alpha}\,,\qquad \Box:=\frac{1}{2}\p_{\alpha\dot\alpha}\p^{\alpha\dot\alpha}\,.
\end{align}
Expanding $\sa_{\alpha\dot\alpha}=\sum_s\lambda^{\beta(s)}\bar\mu^{\beta(s)}\Big[A_{\beta(2s)\alpha\,\dot\alpha}+4s\,\eps_{\alpha\beta}\sA_{\beta(2s-1)\,\dot\alpha}\Big]$ and repeating the computation of the Euclidean \hsikkt in \cite{Steinacker:2023zrb} with some appropriate rescaling, we get 
\begin{equation}\label{S2-Apm}
    \begin{split}
    S_2&\approx+\pi  \sum_{s\geq 1}\frac{1}{s+1}\int d^4 y \sinh(\tau)\,\Big(A^{\alpha(2s-1)\,\dot\alpha}\Box A_{\alpha(2s-1)\,\dot\alpha}+\sA^{\alpha(2s-1)\,\dot\alpha}\Box \sA_{\alpha(2s-1)\,\dot\alpha}\Big)\\
    &\approx+\pi \sum_{s\geq 1}\frac{1}{s+1}\int d^4y \,\sinh(\tau)\A_-^{\alpha(2s-1)\,\dot\alpha}\Box\A^+_{\alpha(2s-1)\,\dot\alpha}\,.
    \end{split}
\end{equation}
where $ \A^{\alpha(2s-1)\,\dot\alpha}_{\pm}:=A^{\alpha(2s-1)\,\dot\alpha}\pm\im\,\sA^{\alpha(2s-1)\,\dot\alpha}$. We may interpret the higher-spin modes $\A_{\pm}$ as fields with positive/negative ``helicity'' $\pm s$, which carry $2(2s+1)$ dof. To remove the $\sinh(\tau)$ pre-factor in \eqref{S2-Apm}, we shall normalize each (higher-spin) field with a factor of $\sinh(\tau)^{-1/2}$. This normalization will effectively suppress the cubic action with a factor of $e^{-\frac{3}{2}\tau}$ and the quartic action with a factor of $e^{-3\tau}$ at late time regime, as shown in \cite{Steinacker:2023cuf}. 
With this consideration, the kinetic action reads
\begin{align}
    S_2\approx \sum_{s\geq 1}\frac{1}{s+1}\int d^4y\,\A_-^{\alpha(2s-1)\,\dot\alpha}\Box \A^+_{\alpha(2s-1)\,\dot\alpha}\,.
\end{align}
In terms of polarization tensors, we can write
\begin{align}
    \A^{\pm}_{\alpha(2s-1)\,\dot\alpha}=\eps^{\pm}_{\alpha(2s-1)\,\dot\alpha}\,e^{\im\, p\cdot y}\,,
\end{align}
in the local $4d$ regime. We will classify $\eps^{\pm}$ in the next section.

$\symbishop$ \underline{\emph{Cubic action.}} The cubic action of the YM part arises from 
\begin{align}
    S_3=\int \mho\,\{\ttb^{\alpha\dot\alpha},\sa^{\beta\dot\beta}\}\{\sa_{\alpha\dot\alpha},\sa_{\beta\dot\beta}\}\,.
\end{align}
Upon integrating out the fiber coordinates, we obtain
\begin{align}\label{cubic-action}
     S_3\approx  -\im L_{\rm NC}^2e^{-\frac{3}{2}\tau}\sum_{s_i}\int d^4y\,V_{\boldsymbol{m}}^{(s_1,s_2,s_3)}\,,\qquad \boldsymbol{m}=0,1,2,3\,,
\end{align}
in the late time regime using the canonical normalization for spinors, cf. \eqref{canonical-normalization}. Here, $V_{\boldsymbol{m}}^{(s_1,s_2,s_3)}$ are cubic sub-vertices with the external spins $(s_1,s_2,s_3)$ of a general cubic vertex
\begin{align}
    \cV^{\Lambda-\boldsymbol{m}}_{\boldsymbol{m}}=\sum_{s_i}V^{(s_1,s_2,s_3)}_{\boldsymbol{m}}
\end{align}
with maximal total spin $\Lambda=s_1+s_2+s_3$. The bold index $\boldsymbol{m}$ indicates whether $V$'s are the \emph{descendants} of the cubic vertices $V_{\boldsymbol{0}}^{(s_1,s_2,s_3)}$, i.e. the vertices where the second physical modes $\sA$'s show up. For instance, given a cubic sub-vertex $V_{\boldsymbol{0}}^{(s_1,s_2,s_3)}$, its descendants 
\begin{align}
    \big\{V_{\boldsymbol{\textcolor{red!70!black!100}{1}}}^{(\textcolor{red!70!black!100}{s_1-1},s_2,s_3)},V_{\boldsymbol{\textcolor{red!70!black!100}{1}}}^{(s_1,\textcolor{red!70!black!100}{s_2-1},s_3)},V_{\boldsymbol{\textcolor{red!70!black!100}{1}}}^{(s_1,s_2,\textcolor{red!70!black!100}{s_3-1})},V_{\boldsymbol{\textcolor{red!70!black!100}{2}}}^{(\textcolor{red!70!black!100}{s_1-1},\textcolor{red!70!black!100}{s_2-1},s_3)},V_{\boldsymbol{\textcolor{red!70!black!100}{2}}}^{(\textcolor{red!70!black!100}{s_1-1},s_2,\textcolor{red!70!black!100}{s_3-1})},\ldots\big\}
\end{align}
can be obtained by taking the trace wrt. $\eps$'s symbols if $s_i$ are high enough. Furthermore, as observed in \cite{Steinacker:2023cuf}, all interacting vertices exist only for $\Lambda\in 2\N$ while they vanish if $\Lambda$ is odd. For example, at $\Lambda=3,5,\ldots$, it can be checked explicitly that $V_{\boldsymbol{m}}^{(s_1,s_2,s_3)}=0$. 

For low values of $\Lambda$, e.g. $\Lambda=4,6$, the explicit spacetime cubic vertices of Lorentzian \hsikkt can be derived. The results are listed below.

$\bullet$ \underline{At $\Lambda=4$}, we have
\begin{subequations}
    \begin{align}
    V^{(2,1,1)}_{\boldsymbol{0}}&=-\p_{\tm\dot\tm}A_{\ta(2)\tv\,\dot\tv}\p^{\ta}{}_{\dot\bullet}A^{\tm\dot\tm}\p^{\ta\dot\bullet}A^{\tv\dot\tv}-y^{\ta \dot\ta}y^{\ta\dot\tb}(\p_{\tm\dot\tm}A_{\ta(2)\tv\,\dot\tv})(\p_{\bullet\dot\ta}A^{\tm\dot\tm})(\p^{\bullet}{}_{\dot\tb}A^{\tv\dot\tv})\,,\\
    V^{(1,2,1)}_{\boldsymbol{0}}&=-\p_{\tm\dot\tm}A_{\tv\dot\tv}\p_{\ta\dot\bullet}A^{\ta(2)\tm\,\dot\tm}\p_{\ta}{}^{\dot\bullet}A^{\tv\dot\tv}-y^{\ta\dot\ta}y^{\ta\dot\tb}(\p_{\tm\dot\tm}A_{\tv\dot\tv})(\p_{\bullet\dot\ta}A_{\ta(2)}{}^{\tm\dot\tm})(\p^{\bullet}{}_{\dot\tb}A^{\tv\dot\tv})\,,\\
    V^{(1,1,2)}_{\boldsymbol{0}}&=-\p_{\tm\dot\tm}A_{\tv\dot\tv}\p_{\ta\dot\bullet}A^{\tm\dot\tm}\p_{\ta}{}^{\dot\bullet}A^{\ta(2)\tv\,\dot\tv}-y^{\ta\dot\ta}y^{\ta\dot\tb}(\p_{\tm\dot\tm}A_{\tv\dot\tv})(\p_{\bullet\dot\ta}A^{\tm\dot\tm})(\p^{\bullet}{}_{\dot\tb}A_{\ta(2)}{}^{\tv\dot\tv})\,,\\
    V_{\boldsymbol{\textcolor{red!70!black!100}{1}}}^{(\redone,1,1)}&=(\p_{\tm\dot\tm}\sA_{\tn\dot\tv})(\p^{\tn}{}_{\dot\bullet}A^{\tm\dot\tm})(\p_{\tv}{}^{\dot\bullet}A^{\tv\dot\tv})+(\p_{\tm\dot\tm}\sA_{\tto\dot\tv})(\p_{\tv\dot\bullet}A^{\tm\dot\tm})(\p^{\tto\dot\bullet}A^{\tv\dot\tv})\nn\\
     &\quad -(y_{\tv}{}^{\dot\ta}y^{\ta\dot\tb}+y^{\ta\dot\ta}y_{\tv}{}^{\dot\tb})(\p_{\tm\dot\tm}\sA_{\ta\dot\tv})(\p_{\bullet\dot\ta}A^{\tm\dot\tm})(\p^{\bullet}{}_{\dot\tb}A^{\tv\dot\tv}) \,,\\
    V_{\boldsymbol{\textcolor{red!70!black!100}{1}}}^{(1,\redone,1)}&=(\p^{\tto}{}_{\dot\bullet}A_{\tv\dot\tv})(\p_{\tn\dot\bullet}\sA^{\tn\dot\tm})(\p_{\tto}{}^{\dot\bullet}A^{\tv\dot\tv})+(\p^{\tn}{}_{\dot\tm}A_{\tv\dot\tv})(\p_{\tn\dot\bullet}\sA^{\tto\dot\tm})(\p_{\tto}{}^{\dot\bullet}A^{\tv\dot\tv})\nn\\
    &\quad -(y^{\tm\dot\ta}y^{\ta\dot\tb}+y^{\ta\dot\ta}y^{\tm\dot\tb})(\p_{\tm\dot\tm}A_{\tv\dot\tv})(\p_{\bullet\dot\ta}\sA_{\ta}{}^{\dot\tm})(\p^{\bullet}{}_{\dot\tb}A^{\tv\dot\tv})\,,\\
    V_{\boldsymbol{\textcolor{red!70!black!100}{1}}}^{(1,1,\redone)}&=(\p_{\tm\dot\tm}A^{\tn}{}_{\dot\tv})(\p_{\tn\dot\bullet}A^{\tm\dot\tm})(\p_{\tto}{}^{\dot\bullet}\sA^{\tto\dot\tv})+(\p_{\tm\dot\tm}A^{\tto}{}_{\dot\tv})(\p_{\tn\dot\bullet}A^{\tm\dot\tm})(\p_{\tto}{}^{\dot\bullet}\sA^{\tn\dot\tv})\nn\\
    &\quad -(y^{\tv\dot\ta}y^{\ta\dot\tb}+y^{\ta\dot\ta}y^{\tv\dot\tb})(\p_{\tm\dot\tm}A_{\tv\dot\tv})(\p_{\bullet\dot\ta}A^{\tm\dot\tm})(\p^{\bullet}{}_{\dot\tb}\sA_{\ta}{}^{\dot\tv})\,.
\end{align}
\end{subequations}
Here, we have changed the Greek letters to Roman letters for better readability.

The results for $\cV^{\Lambda=6}_{\boldsymbol{0}}$ can be found in Appendix \ref{app:A}. Other descendants from $\cV^{\Lambda=6}_{\boldsymbol{0}}$ are omitted in this work since they do not add new insight to the study of scattering amplitudes. Note that we can substitute
\begin{align}
    A_{\ta(2s)\tm\,\dot\tm}=\frac{1}{2}\big(\A^+_{(\ta(2s)\tm)\,\dot\tm}+\A^-_{(\ta(2s)\tm)\,\dot\tm}\big)\,,\quad \sA_{\ta(2s-1)\,\dot\tm}=-\frac{\im}{2} \big(\A^+_{\ta(2s-1)\,\dot\tm}-\A^-_{\ta(2s-1)\,\dot\tm}\big)\,,
\end{align}
to obtain ``diagonalized'' vertices which have all possible configuration of ``helicities'', i.e.
\begin{align}
    (+++),(-++),(+-+),(++-),(--+),(-+-),(+--),(---)\,.
\end{align}

\underline{\emph{Remark.}} Observe that the structure of vertices in the Lorentzian case is different from Euclidean \hsikkt. On the one hand, the tangential components $y^{\alpha\dot\alpha}$ of $y^a$, which are used to define background and fluctuations for Euclidean \hsikkt, are anti-symmetric. This simplifies the structure of cubic and quartic vertices in the Euclidean theory and gives it some sort of self-dual or chiral feature on the Euclidean background; cf. 
\cite{Steinacker:2023zrb}. On the other hand, in the Lorentzian case, $\ttb^{\alpha\dot\alpha}$ cf. \eqref{eq:Hopf-t} is neither symmetric nor anti-symmetric. As a result, the Lorentzian theory really behaves as a higher-spin generalization of $\cN=4$ SYM. 
\medskip

$\symrook$ \underline{\emph{Quartic action.}} The quartic term at late time reads
\begin{align}\label{quartic-action}
    S_4\approx -L_{\rm NC}^4e^{-3\tau}\sum_{s_i}\int d^4y\,V^{(s_1,s_2,s_3,s_4)}_{\boldsymbol{m}}\,,\qquad \boldsymbol{m}=0,1,2,3,4\,. 
\end{align}
Once again, $V^{(s_1,s_2,s_3,s_4)}_{\boldsymbol{m}}$ vanishes if $\Lambda$ is odd. In this work, we compute the simplest example of the quartic vertex at $\Lambda=4$ where
\begin{align}
    V^{(1,1,1,1)}_{\boldsymbol{0}}&=(\p_{\tm\dot\bullet}A^{\ta\dot\ta})(\p^{\tp \dot\bullet}A^{\tb\dot\tb})(\p_{\tp\dot\diamond}A_{\ta\dot\ta})(\p^{\tm\dot\diamond}A_{\tb\dot\tb})+(\p_{\tm\dot\bullet}A^{\ta\dot\ta})(\p^{\tp\dot\bullet}A^{\tb\dot\tb})(\p^{\tm}{}_{\dot\diamond}A_{\ta\dot\ta})(\p_{\tp}{}^{\dot\diamond}A_{\tb\dot\tb})\nn\\
    &+y_{\te_1}{}^{\dot\tp}y^{\te_2\dot\tq}\Big[\p_{\te_2\dot\bullet}A^{\ta\dot\ta}\p^{\te_1\dot\bullet}A^{\tb\dot\tb}\p_{\diamond\dot\tp}A_{\ta\dot\ta}\p^{\diamond}{}_{\dot\tq}A_{\tb\dot\tb}+\p^{\te_1}{}_{\dot\bullet}A^{\ta\dot\ta}\p_{\te_2}{}^{\dot\bullet}A^{\tb\dot\tb}\p_{\diamond\dot\tp}A_{\ta\dot\ta}\p^{\diamond}{}_{\dot\tq}A_{\tb\dot\tb}\Big]\nn\\
    &+y_{\tf_1}{}^{\dot\tm}y^{\tf_2\dot\tn}\Big[\p_{\diamond\dot\tm}A^{\ta\dot\ta}\p^{\diamond}{}_{\dot\tn}A^{\tb\dot\tb}\p_{\tf_2\dot\bullet}A_{\ta\dot\ta}\p^{\tf_1\dot\bullet}A_{\tb\dot\tb}+\p_{\diamond\dot\tm}A^{\ta\dot\ta}\p^{\diamond}{}_{\dot\tn}A^{\tb\dot\tb}\p^{\tf_1}{}_{\dot\bullet}A_{\ta\dot\ta}\p_{\tf_2}{}^{\dot\bullet}A_{\tb\dot\tb}\Big]\nn\\
    &+\frac{1}{2}\big(y_{\tf_1}{}^{\dot\tm}y^{\tf_2\dot\tn}+y^{\tf_2\dot\tm}y_{\tf_1}{}^{\dot\tn}\big)\big(y_{\tf_2}{}^{\dot\tp}y^{\tf_1\dot\tq}+y^{\tf_1\dot\tp}y_{\tf_2}{}^{\dot\tq}\big)\p_{\diamond\dot\tm}A^{\ta\dot\ta}\p^{\diamond}{}_{\dot\tn}A^{\tb\dot\tb}\p_{\circ\dot\tp}A_{\ta\dot\ta}\p^{\circ}{}_{\dot\tq}A_{\tb\dot\tb}\,.
\end{align}
Note that all possible helicities are allowed. This is in contrast with the usual story of massless (quasi-) chiral higher-spin theories, where some of the configurations of the external helicities are absent by construction. The reason for this discrepancy is that most of the higher-spin modes in the present Lorentzian \hsikkt theory are would-be-massive modes and the interaction vertices are mildly Lorentz violated, at least in the present unitary or space-like formulation. 

\section{Yang-Mills sector tree-level scattering amplitudes}\label{sec:amplitudes}
This section studies some simple tree-level scattering amplitudes of the Lorentzian \hsikkt theory. 
We observe that the higher-spin tree-level amplitudes are non-trivial on-shell due to Lorentz-violating vertices, as in \cite{Steinacker:2023cuf}. Finally, we comment on the massless sector, where some simplifications occur.

\subsection{Kinematics and propagators}
In the following computations, we shall use the diagonalized higher-spin modes $\A^{\pm}$. We call these modes the ``diagonalized'' or ``helicity'' higher-spin modes. 

\paragraph{Kinematics.} Assuming $p_i^2=0$ in the locally flat and late time regime, we can write the massless 4-momentum $p^{\alpha\dot\alpha}_i$ of an external field $\A^{\pm}_i$ in spinorial form as $p_i^{\alpha\dot\alpha}=\rho_i^{\alpha}\tilde{\rho}_i^{\dot\alpha}$. 
There are two specific sectors that we will focus on in Lorentzian \hsikkt theory:

$\bullet$ \underline{\emph{Would-be-massive sector.}} 
Assuming factorization of spinors, i.e.
\begin{align}
    \zeta_{\alpha(2s)}=\zeta_{\alpha_1}\ldots\zeta_{\alpha_s}
\end{align}
etc., we propose the following ansatz for the positive and negative helicity polarization tensors associated with the would-be-massive external higher-spin states:
\begin{align}\label{YMhel}
    \eps^{i+}_{\alpha(2s-1)\,\dot\alpha}=\frac{\zeta_{\alpha(2s-1)}\tilde\kappa^i_{\dot\alpha}}{\langle\kappa_i\,\zeta\rangle^{2s-1}}\,,\qquad  \eps^{i-}_{\alpha(2s-1)\,\dot\alpha}&=\frac{\kappa^i_{\alpha(2s-1)}\,\tilde{\zeta}_{\dot\alpha}}{[\kappa_i\,\zeta]}\,,
\end{align}
where $(\zeta,\tilde\zeta)$ are auxiliary spinors and $(\kappa,\tilde \kappa)$ may be referred to as principal spinors. A priori, $ \kappa_i^{\alpha}\rho_{i\alpha}\neq 0$ and $\tilde\kappa_i^{\dot\alpha}\tilde\rho_{i\dot\alpha}\neq 0$ due to the fact that higher-spin fields in Lorentzian \hsikkt theory carry more dof. compared to conventional higher-spin theories. Nevertheless, the gauge-fixing condition \eqref{gaugefix-intertwiner} on twistor space descends to
\begin{align}\label{eq:divergence-free}
    p^{\alpha\dot\alpha}\eps^{\pm}_{\beta(2s-2)\alpha\,\dot\alpha}=0\,
\end{align}
on spacetime, which should hold for all physical modes. This implies
\begin{align}
    \zeta_\alpha p^{\alpha\dot\alpha}\tilde\kappa_{\dot\alpha} = 0=\kappa_{\alpha}p^{\alpha\dot\alpha}\tilde\zeta_{\dot\alpha}\,,
\end{align}
or equivalently $\rho_i^{\alpha}\simeq \kappa_i^{\alpha}$ and $\tilde\rho_i^{\dot\alpha}\simeq \tilde\kappa_i^{\dot\alpha}$ since $(\zeta,\tilde\zeta)$ are auxiliary spinors. As a result, the polarization tensors \eqref{YMhel} reduce to 
\begin{align}\label{YMhel-massless}
    \eps^{i+}_{\alpha(2s-1)\,\dot\alpha}=\frac{\zeta_{\alpha(2s-1)}\tilde\rho^i_{\dot\alpha}}{\langle i\,\zeta\rangle^{2s-1}}\,,\qquad  \eps^{i-}_{\alpha(2s-1)\,\dot\alpha}&=\frac{\rho^i_{\alpha(2s-1)}\,\tilde{\zeta}_{\dot\alpha}}{[i\,\tilde{\zeta}]}\,.
\end{align}
It is easy to check that the above representatives of the polarization tensors obey the normalization 
\be
\eps^{+}_{\alpha(2s-1)\,\dot\alpha}\eps_{-}^{\alpha(2s-1)\,\dot\alpha}=-1\,.
\ee
Furthermore, for the ansatz \eqref{YMhel-massless}, we can obtain the standard eom. for free positive/negative helicity massless higher-spin fields as \cite{Adamo:2022lah}:
\begin{subequations}\label{eq:nice-properties}
\begin{align}
 \partial_{\alpha}{}^{\dot\gamma}\A^{+}_{\alpha(2s-1)\,\dot\gamma}&=0\label{ph1}\,,\\
 \partial^{\beta\dot\alpha}\partial_{(\beta}{}^{\dot\gamma}\A^{-}_{\alpha(2s-1))\,\dot\gamma}&=0\label{nh1}\,.
\end{align}
\end{subequations}
This can be verified directly by plugging in the ansatz \eqref{YMhel-massless}.
Note, however, that after imposing \eqref{eq:divergence-free}, the diagonalized higher-spin fields $\A^{\pm}_{\alpha(2s-1)\,\dot\alpha}$ have
\begin{align}
    4s-(2s-1)=2s+1
\end{align}
degrees of freedom, which correspond to the would-be-massive modes. 
We thus recover the correct dof. for a massive higher-spin field in four dimension. 

$\bullet$ \underline{\emph{Massless sector.}} To reduce the degrees of freedom for the massless fields $\A^{\pm}$ to two, we can impose in addition the space-like or unitary gauge conditions:
\begin{align}\label{space-like-condition}
    y^{\alpha\dot\alpha}\eps^{\pm}_{\beta(2s-2)\alpha\,\dot\alpha}=0\,,
\end{align}
on the polarization tensors $\eps_{\alpha(2s_i-1)\,\dot\alpha}^{\pm}$. This space-like gauge conditions reduce further $(2s-1)$ dof. from $\A^{\pm}$, leaving us with $2$ dof of a massless higher-spin gauge field in four dimensions. Thus, after imposing \eqref{space-like-condition}, $\A^+$ and $\A^-$ each carries 1 dof. corresponding to two helicities of a massless higher-spin field. 

It is crucial to note that in this massless sector, the auxiliary spinors $(\zeta,\tilde\zeta)$ should drop out in any scattering amplitude expressions in the massless sector, as in the standard formulation of massless gauge fields. Below, we show that this is indeed the case. In particular, we can simplify the space-like gauge conditions to:
\begin{align}\label{eq:space-like-condition-spinors}
   y^{\alpha\dot\alpha}\zeta_{\alpha}\tilde\rho_{i\dot\alpha}=0=y^{\alpha\dot\alpha}\rho_{i\alpha}\tilde\zeta_{\dot\alpha}\,.
\end{align}
This will help us eliminate some explicit Lorentz-violating structures of the 3-point scattering amplitudes in the massless sector.

\paragraph{Propagator.} Recall that the kinetic action of the Yang-Mills sector after rescaling is
\be\label{FreeGT}
S_2=\sum_s\frac{1}{s+1}\int\d^4y\,\A^{\alpha(2s-1)\,\dot\alpha}_-\Box \A^+_{\alpha(2s-1)\,\dot\alpha}\,,\qquad s\geq 1\,.
\ee
Here, the fields comprise all would-be massive as well as massless modes.
Thus, up to normalization, the propagator between positive and negative helicity fields reads
\begin{align}\label{propagator}
    \langle \A^{+}_{\alpha(2s-1)\,\dot\alpha}(p)\A_{-}^{\beta(2s'-1)\,\dot\beta}(p')\rangle \sim \delta^4(p+p')\delta_{s,s'}\frac{\delta_{(\alpha_1}{}^{(\beta_1}\ldots \delta_{\alpha_{2s-1})}{}^{\beta_{2s'-1})}\delta_{\dot\alpha}{}^{\dot\beta}}{p^2}\,,
\end{align}
where $\delta_{x,y}$'s denote the Kronecker deltas. It is natural to view the $\A_{\alpha(2s-1)\,\dot\alpha}$ as chiral fields \cite{Krasnov:2021nsq}, because they have more un-dotted than dotted spinorial indices. 


\subsection{Three-point tree-level amplitudes} 
Since the interactions of Lorentzian \hsikkt mildly violate local Lorentz invariance, the on-shell tree-level 3-point amplitudes are \emph{not} vanishing a priori. Here, we only compute some of the 3-point amplitudes associated with the sub-vertices $V^{(s_1,s_2,s_3)}_{\boldsymbol{0}}$ in Section \ref{sec:SO(1,3)spinors-IKKT}. In the following, we denote the $3$-point scattering amplitude as $\cM_3(1_{s_1}^{h_1},2_{s_2}^{h_2},3_{s_3}^{h_3})$, where $h_i=\pm$ is the associated helicity of the $i$th particle.

Let us begin with $(V^{(2,1,1)}_{\boldsymbol{0}},V^{(1,2,1)}_{\boldsymbol{0}},V^{(1,1,2)}_{\boldsymbol{0}})$. In diagonalized or helicity form, they read
\begin{subequations}
    \begin{align}
    V^{(2,1,1)}_{\boldsymbol{0}}&=-\frac{1}{8}\p_{\tm\dot\tm}\big[\A^+_{\ta(2)\tv\,\dot\tv}+\A^-_{\ta(2)\tv\,\dot\tv}\big]\p^{\ta}{}_{\dot\bullet}\big[\A_+^{\tm\dot\tm}+\A_-^{\tm\dot\tm}\big]\p^{\ta\dot\bullet}\big[\A_+^{\tv\dot\tv}+\A_-^{\tv\dot\tv}\big]\nn\\
    &\quad -\frac{1}{8}y^{\ta \dot\ta}y^{\ta\dot\tb}\p_{\tm\dot\tm}\big[\A^+_{\ta(2)\tv\,\dot\tv}+\A^-_{\ta(2)\tv\,\dot\tv}\big]\p_{\bullet\dot\ta}\big[\A_+^{\tm\dot\tm}+\A_-^{\tm\dot\tm}\big]\p^{\bullet}{}_{\dot\tb}\big[\A_+^{\tv\dot\tv}+\A_-^{\tv\dot\tv}\big]\,,\\
    V^{(1,2,1)}_{\boldsymbol{0}}&=-\frac{1}{8}\p_{\tm\dot\tm}\big[\A^+_{\tv\dot\tv}+\A^-_{\tv\dot\tv}\big]\p_{\ta\dot\bullet}\big[\A_+^{\ta(2)\tm\,\dot\tm}+\A_-^{\ta(2)\tm\,\dot\tm}\big]\p_{\ta}{}^{\dot\bullet}\big[\A_+^{\tv\dot\tv}+\A_-^{\tv\dot\tv}\big]\nn\\
    &\quad-\frac{1}{8}y^{\ta\dot\ta}y^{\ta\dot\tb}\p^{\tm\dot\tm}\big[\A^+_{\tv\dot\tv}+\A^-_{\tv\dot\tv}\big]\p_{\bullet\dot\ta}\big[\A^+_{\ta(2)\tm\,\dot\tm}+\A^-_{\ta(2)\tm\,\dot\tm}\big]\p^{\bullet}{}_{\dot\tb}\big[\A_+^{\tv\dot\tv}+\A_-^{\tv\dot\tv}\big]\,,\\
    V^{(1,1,2)}_{\boldsymbol{0}}&=-\frac{1}{8}\p_{\tm\dot\tm}\big[\A^+_{\tv\dot\tv}+\A^-_{\tv\dot\tv}\big]\p_{\ta\dot\bullet}\big[\A_+^{\tm\dot\tm}+\A_-^{\tm\dot\tm}\big]\p_{\ta}{}^{\dot\bullet}\big[\A_+^{\ta(2)\tv\,\dot\tv}+\A_-^{\ta(2)\tv\,\dot\tv}\big]\nn\\
    &\quad -\frac{1}{8}y^{\ta\dot\ta}y^{\ta\dot\tb}\p_{\tm\dot\tm}\big[\A_+^{\tv\dot\tv}+\A_-^{\tv\dot\tv}\big]\p_{\bullet\dot\ta}\big[\A_+^{\tm\dot\tm}+\A_-^{\tm\dot\tm}\big]\p^{\bullet}{}_{\dot\tb}\big[\A^+_{\ta(2)\tv\,\dot\tv}+\A^-_{\ta(2)\tv\,\dot\tv}\big]\,.
\end{align}
\end{subequations}
The amplitudes of the above are:
\begin{subequations}
    \begin{align}
        \cM_3(1^{\pm}_2,2^{\pm}_1,3^{\pm}_1)&\sim +e^{-\frac{3}{2}\tau}L_{\rm NC}^2\,[2\,3]( \eps_2^{\pm,\tm\,\dot\tm}p_{1,\tm\dot\tm})\big(\eps^{\pm,\tv\dot\tv}_3\rho_2^{\ta}\rho_3^{\ta}\eps^{\pm}_{1,\ta(2)\tv\,\dot\tv}\big)\nn\\
        &\quad+ e^{-\frac{3}{2}\tau}L_{\rm NC}^2\langle 2\,3\rangle(\eps_2^{\pm,\tm\dot\tm}p_{1,\tm\dot\tm})(\eps_3^{\pm,\tv\dot\tv}y^{\ta\dot\ta}y^{\ta\dot\tb}\tilde \rho_{2\dot\ta}\tilde\rho_{3\dot\tb}\eps^{\pm}_{1,\ta(2)\tv\,\dot\tv})\nn\\
        &\quad+(\text{permutation})\,,\\
        \cM_3(1^{\pm}_1,2^{\pm}_2,3^{\pm}_1)&\sim +e^{-\frac{3}{2}\tau}L_{\rm NC}^2\,[2\,3]\big(\eps_2^{\pm,\ta(2)\tm\,\dot\tm}\rho_{2\ta}\rho_{3\ta}p_{1,\tm\dot\tm}\big)(\eps_{3\,\pm}^{\tv\dot\tv}\eps^{\pm}_{1\tv\,\dot\tv})\nn\\
        &\quad + e^{-\frac{3}{2}\tau}L_{\rm NC}^2\langle 2\,3\rangle (p_1^{\tm\dot\tm}\eps^{\pm}_{2,\ta(2)\tm\,\dot\tm} y^{\ta\dot\ta}y^{\ta\dot\tb}\tilde\rho_{2\dot\ta}\tilde\rho_{3\dot\tb})(\eps_{1,\tv\dot\tv}^{\pm}\eps_{3,\pm}^{\tv\dot\tv})\nn\\
        &\quad+(\text{permutation})\,,\\
        \cM_3(1^{\pm}_1,2^{\pm}_1,3^{\pm}_2)&\sim +e^{-\frac{3}{2}\tau}L_{\rm NC}^2\,[2\,3]( \eps_2^{\pm,\tm\,\dot\tm}p_{1,\tm\dot\tm})\big(\eps_{3\pm}^{\ta(2)\tv\,\dot\tv}\rho_{2\ta}\rho_{3\ta}\eps^{\pm}_{1,\tv\dot\tv}\big)\nn\\
        &\quad +e^{-\frac{3}{2}\tau}L_{\rm NC}^2\langle 2\,3\rangle(\eps_2^{\pm,\tm\dot\tm}p_{1,\tm\dot\tm})(\eps_{1,\pm}^{\tv\dot\tv}y^{\ta\dot\ta}y^{\ta\dot\tb}\eps^{\pm}_{3,\ta(2)\tv\,\dot\tv}\tilde\rho_{2\dot\ta}\tilde\rho_{3\dot\tb})\nn\\
        &\quad+(\text{permutation})\,.
    \end{align}
\end{subequations}
Note that the terms with explicit $y$-factors are invariant under the $SL(2,\C)$ isometry, but (mildly)
 Lorentz-violating. We show in the following that these Lorentz-violating contributions vanish in the massless sector, i.e. after imposing \eqref{eq:space-like-condition-spinors}.
\\
Assuming spinor factorization, we can use \eqref{YMhel-massless} to reduce the above to (see Appendix \ref{app:B})
\begin{subequations}
    \begin{align}
    \cM_3(1^{-}_2,2^{+}_1,3^{+}_1)&\sim e^{-\frac{3}{2}\tau}L_{\rm NC}^2\,[2\,3]\frac{\langle \zeta\,1\rangle^2 [2\,1][3\,\zeta]\langle 2\,1\rangle\langle3\,1\rangle}{\langle 2\,\zeta\rangle \langle 3\,\zeta\rangle [1\,\zeta]}\nn\\
    &\quad +e^{-\frac{3}{2}\tau}L_{\rm NC}^2\langle 2\,3\rangle\frac{\langle \zeta\,1\rangle^2[2\,1][3\,\zeta]}{\langle 2\,\zeta\rangle\langle3\,\zeta\rangle[1\,\zeta]}\big(y^{\ta\dot\ta}y^{\ta\dot\ta}\rho^1_{\ta(2)}\tilde\rho_{2\dot\ta}\tilde\rho_{3\dot\tb}\big)-(2\leftrightarrow 3)\,,\\
    \cM_3(1^{+}_2,2^{+}_1,3^{-}_1)&\sim e^{-\frac{3}{2}\tau}L_{\rm NC}^2\,[2\,3]\frac{[1\,2][\zeta\,1]\langle3\,\zeta\rangle^2}{\langle 1\,\zeta\rangle^2[3\,\zeta]}\nn\\
    &\quad +e^{-\frac{3}{2}\tau}L_{\rm NC}^2\langle 2\,3\rangle\frac{[1\,2]\langle3\,\zeta\rangle[\zeta\,1]}{\langle 2\,\zeta\rangle[3\,\zeta]\langle 1\,\zeta\rangle^2}\big(y^{\ta\dot\ta}y^{\ta\dot\tb}\tilde\rho_{2\dot\ta}\tilde\rho_{3\dot\tb}\zeta_{\ta(2)}\big) +(1\leftrightarrow 2)\,,\\
        \cM_3(1^{-}_2,2^{-}_1,3^{+}_1)&\sim -e^{-\frac{3}{2}\tau}L_{\rm NC}^2\,[2\,3]\frac{\langle 1\,2\rangle^2\langle\zeta\,1\rangle[3\,\zeta]\langle3\,1\rangle}{[2\,\zeta]\langle 3\,\zeta\rangle}\nn\\
        &\quad +e^{-\frac{3}{2}\tau}L_{\rm NC}^2\langle2\,3\rangle\frac{\langle 1\,2\rangle\langle\zeta\,1\rangle[3\,\zeta] }{[2\,\zeta]\langle 3\,\zeta\rangle}\big(y^{\ta\dot\ta}y^{\ta\dot\tb}\tilde \rho_{2\dot\ta}\tilde\rho_{3\dot\tb}\rho^1_{\ta(2)}\big)+(1\leftrightarrow 2)\,,\\
    \cM_3(1^{+}_2,2^{-}_1,3^{-}_1)&\sim e^{-\frac{3}{2}\tau}L_{\rm NC}^2\,[2\,3]\frac{[\zeta\,1]^2\langle 2\,1\rangle\langle 3\,\zeta\rangle^2\langle 2\,\zeta\rangle}{[2\,\zeta][3\,\zeta]\langle 1\,\zeta\rangle^3}\nn\\
    &\quad +e^{-\frac{3}{2}\tau}L_{\rm NC}^2\,\langle2\,3\rangle\frac{[\zeta\,1]^2\langle 2\,1\rangle\langle 3\,\zeta\rangle}{[2\,\zeta][3\,\zeta]\langle 1\,\zeta\rangle^3}\big(y^{\ta\dot\ta}y^{\ta\dot\tb}\tilde\rho_{2\dot\ta}\tilde\rho_{3\dot\tb}\zeta_{\ta(2)}\big)-(2\leftrightarrow 3)\,
    \end{align}
\end{subequations}
for $\cM_3(1^{\pm}_2,2^{\pm}_1,3^{\pm}_1)$. Observe that if $\A_1$ and $\A_3$ have the same helicity, the amplitudes $\cM_3(1^{\pm},2^{h_2},3^{\pm})$ vanish. All other 3-point amplitudes of $\cV^{\Lambda=4}_{\boldsymbol{0}}$ and their descendants can be computed analogously. We present some partial results in Appendix \ref{app:B}.

To cross the spin-2 barrier, we shall consider the example of $V^{(3,2,1)}_{\boldsymbol{0}}$ in its helicity form:
\begin{align}
    V_{\boldsymbol{0}}^{(3,2,1)}=\frac{1}{24}(\p_{\tm\dot\tm}\A^{\pm}_{\ta(2)\tto\tn\tv\,\dot\tv})(\p^{\tn}{}_{\dot\bullet}\A_{\pm}^{\ta(2)\tm\,\dot\tm})(\p^{\tto\dot\bullet}\A_{\pm}^{\tv\,\dot\tv})+\cC_{(3,2,1)}(y)
\end{align}
whose 3-point amplitudes read
\begin{align}
    \cM_3(1_3^{\pm},2_2^{\pm},1_1^{\pm})\sim e^{-\frac{3}{2}\tau}L_{\rm NC}^2\Big([2\,3]\big(\rho_2^{\tn}\rho_3^{\tto}\eps_{3\pm}^{\tv\,\dot\tv}\eps^{\pm}_{1,\ta(2)\tto\tn\tv\,\dot\tv}\eps_{2\pm}^{\ta(2)\tm\,\dot\tm}p_{1\tm\,\dot\tm}\big)+\cC_{(3,2,1)}(y)\Big)+(\text{permutation})\,,
\end{align}
where $\cC(y)$ are contributions that are $y$-dependent. Other cases can be done analogously. 

The sector of would-be massive modes is rather mysterious, since it has no counterpart in other more standard higher-spin theories.\footnote{See \cite{Metsaev:2005ar,Metsaev:2007rn,Metsaev:2022yvb} for classification of all Lorentz-invariant vertices in $4d$. It would be interesting to make a comparison between our results and what was found by Metsaev. See also e.g. \cite{Buchbinder:2022lsi,Skvortsov:2023jbn} for a covariant construction of cubic vertices involving massive higher-spin fields.}
However, we can show that 
 when restricting these amplitudes to the massless sector, one recovers gauge invariant amplitudes where all $(\zeta,\tilde\zeta)$ drop out of the final answers. 

$\bullet$ \underline{\emph{Massless sector.}} When we go to the massless sector, all explicit Lorentz-violation contributions, i.e. terms with $y$'s factors, can be simplified by imposing the space-like gauge conditions \eqref{space-like-condition} and \eqref{eq:space-like-condition-spinors}. 
As a result, the above $\cM_3(1^{\pm},2^{h_2},3^{\pm})$ amplitudes contain only contributions which are purely spinor-dependent. 
Using spinor-helicity techniques outlined in \cite{Adamo:2022lah,Tran:2022amg}, we get, for instance
\begin{subequations}
    \begin{align}
    \cM_3(1^{-}_2,2^{+}_1,3^{+}_1)&= 0\,,\\   \cM_3(1^{+}_2,2^{+}_1,3^{-}_1)&\sim e^{-\frac{3}{2}\tau}L_{\rm NC}^2\frac{[1\,2]^4}{[2\,3]^2}+(1\leftrightarrow 2)\,,\\
        \cM_3(1^{-}_2,2^{-}_1,3^{+}_1)&= 0\,,\\
    \cM_3(1^{+}_2,2^{-}_1,3^{-}_1)&= 0\,.
    \end{align}
\end{subequations}
Observe that some of the amplitudes are still non-trivial on-shell even in the massless case. This peculiar result reflects the Lorentz violation due to the Poisson structure \eqref{Poisson-spinor-1}, which involves the time-like vector field $\cY$ \eqref{timelikeT}. However, the cubic amplitudes for the lowest spin sector vanish on-shell, as they should.



Omitting massless 3-point amplitudes that are zero, we also find the following non-vanishing amplitudes
\begin{subequations}
    \begin{align}
        \cM_3(1_1^-,2_2^+,3_1^+)&\sim e^{-\frac{3}{2}\tau}L_{\rm NC}^2 \frac{[2\,3]^4}{[1\,3]^2}\,,\\
        \cM_3(1_1^+,2_2^+,3_1^-)&\sim e^{-\frac{3}{2}\tau}L_{\rm NC}^2 \frac{[1\,2]^3}{[2\,3]}\,,\\
        \cM_3(1_1^-,2_1^+,3_2^+)&\sim e^{-\frac{3}{2}\tau}L_{\rm NC}^2
        \frac{[2\,3]^4}{[1\,2]^2}\,,\\
        \cM_3(1_1^-,2_1^+,3_2^+)&\sim e^{-\frac{3}{2}\tau}L_{\rm NC}^2\frac{[2\,3]^2[3\,1]^2}{[1\,2]^2}\,.
    \end{align}
\end{subequations}
In the $\Lambda=6$ case, we compute only $\cM_3(1^{\pm}_3,2^{\pm}_2,1^{\pm}_1)$, and find that  $\cM_3(1^{\pm}_3,2^{\pm}_2,1^{\pm}_1)=0$ upon projecting to the massless sector. We expect that the non-vanishing massless amplitudes of $\cV^{\Lambda=6}_{\boldsymbol{0}}$  come from the sub-vertex $V^{(2,2,2)}_{\boldsymbol{0}}$. However, this is left as an exercise for the reader. 

\underline{\emph{Remark.}} The above results of massless scattering amplitudes indicate that the massless sector of the Yang-Mills term in Lorentzian \hsikkt is also chiral as in the case of Euclidean \hsikkt. This sector features a two-derivative chiral theory in the light-cone gauge. Indeed, let us project the above results to the light-cone gauge using the dictionary in \cite{Chalmers:1998jb,Bengtsson:2016jfk} where
\begin{align}
    i]=2^{1/4}\binom{\bar{\kbold}_i\,\beta_i^{-1/2}}{-\beta_i^{1/2}}\,,\qquad  i\rangle =2^{1/4}\binom{\kbold_i\,\beta_i^{-1/2}}{-\beta_i^{1/2}}\,,
\end{align}
where $\kbold_i^+\equiv \beta_i$, and $(\bar{\kbold}_i,\kbold_i)$ are referred to as \emph{transverse} derivatives. These data allow us to express the square and angle brackets as
\begin{align}
    [i\,j]=\sqrt{\frac{2}{\beta_i\beta_j}}\PPb_{ij}\,,\qquad \langle i\,j\rangle=\sqrt{\frac{2}{\beta_i\beta_j}}\PP_{ij}\,
\end{align}
for $\PPb_{ij}=\bar \kbold_i\beta_j-\bar{\kbold}_j\beta_i$ and $\PP_{ij}= \kbold_i\beta_j- \kbold_j\beta_i$. Using momentum conservation, one can show that
\begin{align} \label{PP1}
   \PPb_{12}=\PPb_{23}=\PPb_{31}= \PPb=\frac13\left[ (\beta_1-\beta_2)\bar{\kbold}_3+(\beta_2-\beta_3)\bar{\kbold}_1+(\beta_3-\beta_1)\bar{\kbold}_2\right]\,
\end{align}
at the level of 3-point amplitudes. Thus, in terms of these new variables, all massless 3-point amplitudes of Lorentzian \hsikkt scale as $\PPb^2$. Since there are no 3-point $\P$ vertices, the massless sector of the Lorentzian \hsikkt theory may be considered as a chiral theory in the sense of \cite{Metsaev:1991mt,Metsaev:1991nb}.\footnote{See also related discussions in \cite{Monteiro:2022xwq}.}
 However, it is important noting that these amplitudes violate local Lorentz invariance.


\paragraph{Comment on $n$-point amplitudes.} Computing higher-point amplitudes in the Yang-Mills sector is a non-trivial task. In particular, while some exchange channels vanish at the level of vertices $\cV^{\Lambda}_{\boldsymbol{0}}$ with maximal total spin $\Lambda\geq 4$, some descendants vertices do participate in higher-spin scatterings. Consider, for instance, the simplest example $\cM_4(1^{\pm},1^{\pm},1^{\pm},1^{\pm})$ of 4-point tree-level amplitude. As shown in Appendix \ref{app:A}, while vertices such ash $V_{\boldsymbol{0}}^{(1,1,1)}=0$ and $V_{\boldsymbol{0}}^{(s\geq 2,1,1)}=0$ vanish, the descendants sub-vertices such as $(V^{(\redone,1,1)}_{\boldsymbol{\redone}},V^{(1,\redone,1)}_{\boldsymbol{\redone}},V^{(1,1,\redone)}_{\boldsymbol{\redone}})$ are non-trivial. 
Nevertheless, this may be simplified if we consider only the massless sector, where most of the vertices vanish on-shell.

Finally, since the cubic and quartic vertices are suppressed by $e^{-\frac{3}{2}\tau}$ and $e^{-3\tau}$ factors, all $n$-point scattering amplitude are suppressed by a factor of $e^{-\frac{3}{2}(n-2)\tau}$ in the late time regime.


\section{Discussion}\label{sec:discussion}

In this work, we establish a global spinorial description for Lorentzian \hsikkt theory using Weyl spinors, which transform in the fundamental of the non-compact space-like isometry group $SL(2,\C)\simeq SO(1,3)$. We obtain the spacetime action of the $\hs$-valued Yang-Mills sector in \hsikkt theory in spinorial form, and study some tree-level 3-point higher-spin scattering amplitudes. Remarkably, they can be non-trivial on-shell due to mild Lorentz violation, but are strongly suppressed by a time-dependent coupling $e^{-\frac{3}{2}\tau}$ in the late-time and flat regime.  These results are consistent with a related study of similar amplitudes in the vectorial formulation of the same model in \cite{Steinacker:2023cuf}.

The technical tools presented in this work should allow in principle to extract all vertices in the remaining sectors and study higher-point tree-level amplitudes, and (eventually) also loop amplitudes of the Lorentzian \hsikkt under consideration. However, a better organization for the spacetime vertices may be needed to simplify this task. In particular, this should allow to explore gravitational interactions between spin-2 fields in Lorentzian \hsikkt and to compare them with standard GR, see also \cite{Kawai:2016wfh} in a similar context as ours. However, this requires including one-loop amplitudes, since the Einstein-Hilbert action only arises at one loop. That task is postponed to future work.

Given that Lorentz \hsikkt can have non-trivial scatterings from the would-be-massive higher-spin modes, it would be interesting to 
see whether these modes can acquire masses from quantum corrections. Intriguingly, we observe that some 3-point amplitudes in the massless sector of the model do not vanish due to Lorentz-violating structures encoded in the Poisson structure $\theta^{\alpha\dot\alpha\beta\dot\beta}$ cf. \eqref{Poisson-theta-spinor}. Nevertheless, local Lorentz invariance is expected to be recovered from the $\hs$ gauge symmetry in this model, as observed in \cite{Steinacker:2023cuf}.


\section*{Acknowledgement}
This work is supported by the Austrian Science Fund (FWF) grant P36479.

\appendix

\section{Algebra of functions on \texorpdfstring{$H^4_N$}{H4N} and doubleton representations}\label{app:C}
In the present approach, the 3+1 dimensional spacetime brane $\cM^{1,3}$ of interest is a projection of the almost-commutative 4-hyperboloid $H^4_N$ to $\R^{1,3}$. The semi-classical $H_N^4$ is defined by the following $\mso(1,4)$-covariant relations \cite{Sperling:2018xrm}
\begin{subequations}
\label{eq:so(4,2)algebra}
\begin{align}
    \{m_{ab},m_{cd}\}&=+(m_{ad}\eta_{bc}-m_{ac}\eta_{bd}-m_{bd}\eta_{ac}+m_{bc}\eta_{ad})\,,\\
    \{m_{ab},y_c\}&=+(y_a\eta_{bc}-y_b\eta_{ac})\,,\\
    \{y_a,y_b\}&=\theta^{ab}=-\ell_p^2\, m_{ab}\,,\qquad \qquad \qquad \quad \ \  a,b=0,1,2,3,4\,,\\
    y_ay^a&=-y_0^2+y_{\ta}y^{\ta}=-R^2=-\frac{\ell_p^2N^2}{4}\,,\qquad \ta=1,2,3,4\,,\label{ysphere}
\end{align}
\end{subequations}
where $m_{ab}$ are generators of $\mso(1,4)$ and $\eta_{ab}=\diag(-,+,+,+,+)$ is the metric of $\R^{1,4}$. The self-duality condition
\begin{align}\label{eq:selfdualityso(42)}
    \epsilon_{abcde}m^{ab}y^{c}=-\frac{4N}{\ell_p}m_{de}\,
\end{align}
can be used to reduce
the algebra of functions on $H_N^4$ to the commutative algebra $\R[y^a,m^{ab}]$ of polynomial functions in $y^a$ and $m^{ab}$ corresponding to the Young diagrams
\begin{align}
    C^{\infty}(H^4_N)=\sum_{k,m}f_{c(k)a(m),b(m)}y^{c(k)}m^{ab}\ldots m^{ab}=\bigoplus_{k,m}\ \parbox{85pt}{{\bep(70,30)\unitlength=0.4mm%
\put(0,3){\RectBRowUp{7}{4}{$m+k$}{$m$}}%
\eep}}\,.
\end{align}
This can be interpreted in terms of higher spin ($\hs$)- valued functions on $H^4$, or as functions on 3-dimensional non-compact complex twistor space $\P^{1,2}$.  
Notice that we can view the first three relations in \eqref{eq:so(4,2)algebra} as the Lie algebra $\mso(2,4)$ with generators $m_{\tA\tB}$ obeying
\begin{align}\label{so(2,4)-Alg}
    \{m_{\tA\tB},m_{\tC\tD}\}=\Big(\eta_{\tA\tC}m_{\tB\tD}-\eta_{\tA\tD}m_{\tB\tC}-\eta_{\tB\tC}m_{\tA\tD}+\eta_{\tB\tD}m_{\tA\tC}\Big)\,, \quad \tA,\tB=0,1,\ldots,5\,.
\end{align}
where $\eta^{\tA\tB}=\diag(-,+,+,+,+,-)$. The complete $\mso(2,4)$-invariant relations can be also formulated in terms of symmetric $\mso(1,4)$ generators $l^{AB}$ as follows \cite{Steinacker:2023zrb}:
\begin{subequations}\label{eq:sp(4)relations}
\begin{align}
    \{l^{AB},l^{CD}\}&=+(l^{AC}C^{BD}+l^{AD}C^{BC}+l^{BD}C^{AC}+l^{BC}C^{AD})\,,\\
    \{l^{AB},y^{CD}\}&=+(y^{AC}C^{BD}+y^{BC}C^{AD}-y^{AD}C^{BC}-y^{BD}C^{AC})\,,\\
    \{y^{AB},y^{CD}\}&=-(l^{AC}C^{BD}-l^{AD}C^{BC}-l^{BC}C^{AD}+l^{BD}C^{AC})\,,\\
    y_{AB}y^{AB}&=  l_{AB}l^{AB} = -4R^2\,,\\
    \epsilon_{ABCD}y^{AB} &= y_{CD}\,,\qquad \qquad \qquad \qquad \qquad \qquad A=1,2,3,4\,, \label{self-duality-sp4} 
\end{align}
\end{subequations}
where $C^{AB}$ is an $\mso(1,4)$-invariant matrix 
\begin{align}
    C^{AB}=-C^{BA}=\diag(\epsilon^{\alpha\beta},\epsilon^{\dot\alpha\dot\beta})
    \label{C-matrix}\,,\qquad \epsilon^{01}=-\epsilon^{10}=1\,,\quad \epsilon^{\alpha\beta}=\epsilon_{\alpha\beta}\,
\end{align}
and $\epsilon^{\alpha\beta}$ is the  $\msp(2)$-invariant tensor.
Notice that we have used $y^{AB}=-y^{BA}=\ell_p^{-1}y^a\gamma_a^{AB}$ and $l^{AB}= l^{BA}=\frac{1}{2} m^{ab}\Sigma^{AB}_{ab}$, where $\Sigma_{ab}:=\frac{\im}{4}[\gamma_a,\gamma_b]$ and $\gamma_a$ are the gamma matrices of $\mso(1,4)$, to map the standard vectorial $\mso(2,4)$ algebra in \cite{Sperling:2018xrm} to \eqref{eq:sp(4)relations}. The space of functions on $H^4_N$ can now be cast in terms of polynomial functions in $y^{AB}$ and $l^{AB}$, i.e.
\begin{align}
    C^{\infty}(H^4_N)=\sum_{k,m}f_{A(k)B(2m)|C(k)}y^{AC}\ldots y^{AC}l^{BB}\ldots l^{BB}=\bigoplus_{k,m}\ \parbox{75pt}{{\bep(70,30)\unitlength=0.4mm%
\put(0,3){\RectBRowUp{7}{4}{$k+2m$}{$k$}}%
\eep}}\,
\end{align}
As always, $\msu(2,2)$ and/or $\msp(2)$ indices are raised and lowered by
\begin{align}
    V_{B}C^{AB}=U^{A}\,,\qquad V^{B}C_{BA}=U_{A}\,,\qquad 
    u^{\alpha}=u_{\beta}\epsilon^{\alpha\beta}\,,\qquad u_{\alpha}=u^{\beta}\epsilon_{\beta\alpha}\,.
\end{align}

\paragraph{Doubleton minimal representations.}


To define the non-commutative fuzzy $H^4_N$ and 
to exhibit the hermiticity and positivity properties of the underlying $SU(2,2)$ generators, it is useful to realize our Weyl spinors $(\lambda,\mu)$ and $(\bar\mu,\bar\lambda)$ cf. Section \ref{sec:3} in terms of bosonic creation- and annihilation generators, 
cf. \cite{Govil:2013uta}.
In particular, we shall define
\begin{align}
    Z^A=\binom{\lambda^{\alpha}}{\mu^{\dot\alpha}}=\frac{1}{\sqrt{2}}\binom{a_i-b^{\dagger}_i}{a_i+b^{\dagger}_i}\,,\qquad \bar Z_A=(\bar\mu_{\alpha}, \bar\lambda_{\dot\alpha})=\frac{1}{\sqrt{2}}(a^{\dagger}_i+b_i\,,a^{\dagger}_i-b_i)\,,
\end{align}
where $(a_i,a_j^{\dagger})$ and $(b_i,b_j^{\dagger})$, with $i=1,2$,  satisfy the oscillator relations
\begin{align}
    [a_i,a_j^{\dagger}]=\delta_{ij}\,,\quad [b_i,b_j^{\dagger}]=\delta_{ij}\,,\qquad \text{and zero otherwise}\,.
\end{align}
Notice that they coincide with the $SU(2)_L \times SU(2)_R$ oscillators discussed e.g. in \cite{Sperling:2018xrm}.
We obtain
\begin{subequations}
    \begin{align}\label{oscillators-1}
    \hat \cN&=\bar{Z}_AZ^A=N_a-N_b-2\,,\qquad N_a:=a^{\dagger}_ia_i\,,\qquad N_b:=b^{\dagger}_ib_i\,,\\
    D&=\frac{1}{2}\bar{Z}_A(\gamma_4)^A{}_BZ^B=-\frac{\im}{2}(a_ib_i-a_i^{\dagger}b_i^{\dagger}) \,,
\end{align}
\end{subequations}
where the chiral basis for $\gamma$-matrices is given in \eqref{chiralso(1,5)basis}.
The conformal Hamiltonian $E$ is defined by 
\begin{align}
    E:=\frac{1}{2}\bar{Z}_A(\gamma_0)^A{}_BZ^A=\frac{1}{2}(N_a+N_b+2)\, > 0 .
\end{align}
Thus, given a Fock space built on the vacuum defined by $a_i|0\rangle =0=b_i|0\rangle$, we can construct the lowest weight states 
\begin{subequations}
    \begin{align}
        |\Omega\rangle_{a^{\dagger}}:&=\Big|1+\frac{s}{2},\frac{s}{2},0\Big\rangle = a_{i(s)}^{\dagger}|0\rangle \,,\\
        |\Omega\rangle_{b^{\dagger}}:&=\Big|1+\frac{s}{2},0,\frac{s}{2}\Big\rangle=b_{i(s)}^{\dagger}|0\rangle\,,
    \end{align}
\end{subequations}
which satisfy $\cL^-|\Omega\rangle = 0 = a_i b_j |\Omega\rangle$.
Acting on these states with all $\msu(2,2)$ generators, we obtain the \emph{doubleton minimal representations}, which form a discrete series of unitary representations. Since $\hat\cN$ commutes with $SU(2,2)$, these representations are characterized by the quantum number 
\begin{align}
    \hat \cN = N \ .
    \label{constraint-N}
\end{align}

\section{Integrating out fiber coordinates}\label{app:A}
This appendix provides a description for projecting/integrating out fiber coordinates $(\lambda,\bar\mu)$ of the $S^2\simeq \P^1$ fiber space over $H^4$. Note that $\bar\mu_{\alpha}$ is \emph{not} the quarternionic conjugation of $\lambda^{\alpha}$ since we are working with the non-compact space-like isometry subgroup $SL(2,\C)$ of the structure 
group $SU(2,2)$. Therefore, the standard integral over fibers in the compact case cf. \cite{Penrose:1985bww} must be modified accordingly. In particular, we shall prove the following Lemma. 

\begin{lemma}\label{lem:average} Let $(\lambda^{\alpha},\bar\mu_{\alpha})$ be the fiber coordinates of $\P^1$. We have
\begin{align}\label{eq:bridge}
    \int_{\P^1}\tK\,\frac{\bar\mu_{\alpha(s)}\lambda^{\beta(s)}}{\langle \lambda\,\bar\mu\rangle^s}=-\frac{2\pi}{s+1}\delta_{(\alpha_1}{}^{\beta_1}\ldots\delta_{\alpha_s)}{}^{\beta_s}\,.
\end{align}   
\end{lemma}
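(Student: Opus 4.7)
My plan is to prove the lemma in two standard moves: first an invariance argument to pin down the tensor structure of the integral up to a single scalar, then a single trace to fix that scalar.

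To begin, I would denote the left-hand side by $I_{\alpha_1\cdots\alpha_s}{}^{\beta_1\cdots\beta_s}$ and observe that it is symmetric separately in its $\alpha$-indices (since $\bar\mu_{\alpha(s)}$ abbreviates the symmetric product $\bar\mu_{\alpha_1}\cdots\bar\mu_{\alpha_s}$) and in its $\beta$-indices, so it takes values in $\mathrm{Sym}^s(\mathbb{C}^2)\otimes\mathrm{Sym}^s(\mathbb{C}^2)^{*}$. Next I would check that under an $SL(2,\mathbb{C})$ action $\lambda^{\alpha}\mapsto M^{\alpha}{}_{\beta}\lambda^{\beta}$, $\bar\mu_{\alpha}\mapsto (M^{-1})^{\beta}{}_{\alpha}\bar\mu_{\beta}$, both the measure $\tK$ and the integrand $\bar\mu_{\alpha(s)}\lambda^{\beta(s)}/\langle\lambda\,\bar\mu\rangle^{s}$ are invariant — this rests solely on the $SL(2,\mathbb{C})$-invariance of $\epsilon_{\alpha\beta}$, which is what defines $\langle\,\cdot\,\rangle$ and the symplectic factors $\langle\lambda\,d\lambda\rangle$, $\langle\bar\mu\,d\bar\mu\rangle$. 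Consequently $I$ is an $SL(2,\mathbb{C})$-invariant element of $\mathrm{Sym}^s(\mathbb{C}^2)\otimes\mathrm{Sym}^s(\mathbb{C}^2)^{*}$. Since $\mathrm{Sym}^s(\mathbb{C}^2)$ is irreducible, Schur's lemma forces
\begin{equation*}
I_{\alpha_1\cdots\alpha_s}{}^{\beta_1\cdots\beta_s}=c_s\,\delta_{(\alpha_1}^{\beta_1}\cdots\delta_{\alpha_s)}^{\beta_s}
\end{equation*}
for some constant $c_s$ to be determined.

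To fix $c_s$, I would take the full trace by contracting each $\beta_i$ with $\alpha_i$. On the left-hand side the contraction collapses nicely: since all the $\bar\mu$'s coincide and all the $\lambda$'s coincide, one finds $\bar\mu_{\alpha_1}\cdots\bar\mu_{\alpha_s}\lambda^{\alpha_1}\cdots\lambda^{\alpha_s}=(\lambda^{\alpha}\bar\mu_{\alpha})^{s}=\langle\lambda\,\bar\mu\rangle^{s}$, which exactly cancels the denominator, so the traced integrand is identically $1$ and the traced integral reduces to $\int_{\P^1}\tK=-2\pi$ by \eqref{K-normalization}. On the right-hand side, tracing the symmetrized identity yields $\dim\mathrm{Sym}^s(\mathbb{C}^2)=s+1$. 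Equating the two sides gives $c_s(s+1)=-2\pi$, hence $c_s=-2\pi/(s+1)$, exactly as asserted.

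The main conceptual step — and the only one requiring any care — is the $SL(2,\mathbb{C})$-invariance of the fibrewise integration: one needs to confirm that the $\P^1$ fiber and the measure $\tK$ are compatible with the $SL(2,\mathbb{C})$ action inherited from $SU(2,2)$, modulo the $U(1)$ quotient used to build $\P^{1,2}$ and the incidence relations \eqref{incidence-relations} that eliminate $(\mu,\bar\lambda)$ in favour of $(\lambda,\bar\mu)$ and $y$. Once that invariance is granted, the result is effectively the standard integral-over-$\P^1$ identity on $SU(2)$-equivariant sections adapted to the non-compact space-like setting, and everything collapses to the one-line trace identity above.
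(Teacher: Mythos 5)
Your proof is correct and follows exactly the route the paper takes: $SL(2,\C)$ invariance plus symmetry fixes the tensor structure up to a scalar, and contracting all index pairs (so that $\lambda^{\alpha(s)}\bar\mu_{\alpha(s)}=\langle\lambda\,\bar\mu\rangle^s$ cancels the denominator and $\int_{\P^1}\tK=-2\pi$ fixes the constant against $\operatorname{tr}=s+1$) yields $c_s=-2\pi/(s+1)$. Your version simply spells out the details that the paper's two-line proof leaves implicit.
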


\begin{proof}
The tensor structure on the rhs follows from $SL(2,\C)$ invariance and symmetry, and the normalization can be checked by 
contracting the indices.
\end{proof}

Note that \eqref{eq:bridge} is unchanged if we use the normalization \eqref{canonical-normalization} in the main text. Now, the explicit spacetime action of the Yang-Mills sector in the Lorentzian case can be obtained by averaging over the $\P^1$ fiber using Lemma \ref{lem:average}.

\subsection{Kinetic action}
Recall that the kinetic action in the local $4d$ regime is
\begin{align}\label{K2generation}
    S_2=-\frac{1}{2}\sinh(\tau)\int \d^4y\,\int \tK\, \sa^{\alpha
    \,\dot\alpha}\Box \sa_{\alpha\,\dot\alpha}\,.
\end{align}
Expanding $\sa_{\alpha\dot\alpha}=\sum_s\lambda^{\beta(s)}\bar\mu^{\beta(s)}\Big[A_{\beta(2s)\alpha\,\dot\alpha}+4s\,\eps_{\alpha\beta}\sA_{\beta(2s-1)\,\dot\alpha}\Big]$, and rescaling 
\begin{align}
   A_{\alpha(2s+1)\,\dot\alpha}\mapsto \frac{1}{\langle \lambda\,\bar\mu\rangle^s}A_{\alpha(2s+1)\,\dot\alpha}\,,\qquad  \sA_{\alpha(2s-1)\,\dot\alpha}\mapsto \frac{1}{\langle \lambda\,\bar\mu\rangle^s}\sA_{\alpha(2s-1)\,\dot\alpha}
\end{align}
we get, for instance, the kinetic action for spin-1 fields as
    \begin{align}
        S_2^{(1)}=-\frac{\sinh(\tau)}{2}\int A^{\alpha\dot\alpha}\Box A_{\alpha\dot\alpha}-\frac{1}{\langle \lambda\,\bar\mu\rangle}\sA^{\beta\dot\alpha}\Box A^{\beta}{}_{\dot\alpha}\lambda_{\beta}\bar\mu_{\beta}+\frac{1}{\langle \lambda\,\bar\mu\rangle}A_{\beta}{}^{\dot\alpha}\Box \sA_{\beta\dot\alpha}\lambda^{\beta}\bar\mu^{\beta}+\sA^{\alpha\dot\alpha}\Box\sA_{\alpha\dot\alpha}\,.
    \end{align}
Using Lemma \ref{lem:average} and noting that both mixed terms vanish upon symmetrization, we obtain
    \begin{align}
        S_2^{(1)}=+\pi  \int \sinh(\tau) \,\Big(A^{\alpha\dot\alpha}\Box A_{\alpha\dot\alpha}+\sA^{\alpha\dot\alpha}\Box\sA_{\alpha\dot\alpha}\Big)\,.
    \end{align}
 Inductively, we get the full kinetic term as \cite{Steinacker:2023zrb}
\begin{equation}
    \begin{split}
    S_2&=+\pi   \sum_{s\geq 1}\frac{1}{s+1}\int \d^4 y \,\sinh(\tau)\Big(A^{\alpha(2s-1)\,\dot\alpha}\Box A_{\alpha(2s-1)\,\dot\alpha}+\sA^{\alpha(2s-1)\,\dot\alpha}\Box \sA_{\alpha(2s-1)\,\dot\alpha}\Big)\\
    &=\pi \sum_{s\geq 1}\frac{1}{s+1}\int \d^4 y \,\sinh(\tau)\A_-^{\alpha(2s-1)\,\dot\alpha}\Box\A^+_{\alpha(2s-1)\,\dot\alpha}\,.
    \end{split}
\end{equation}
where
\begin{align}
    \A^{\alpha(2s-1)\,\dot\alpha}_{\pm}:=A^{\alpha(2s-1)\,\dot\alpha}\pm\im\,\sA^{\alpha(2s-1)\,\dot\alpha}\,
\end{align}
may be interpreted as higher-spin fields with positive or 
negative ``helicity''. The factor $\pi$ can be absorbed by rescaling the action appropriately. Moreover, as discussed in \cite{Steinacker:2023cuf}, to have a canonical normalization, we will rescale 
\begin{align}
    \A^{\pm} \mapsto \frac{1}{\sqrt{\sinh(\tau)}}\A^{\pm}\,. 
\end{align}
The effect of this rescaling is that there will be no $\sinh(\tau)\approx e^{\tau}$ factor in the kinetic action in the late time regime. As a consequence, the cubic and quartic vertices will be suppressed by $e^{-\frac{3}{2}\tau}$ and $e^{-3\tau}$ factors, respectively. 

\subsection{Cubic action} 
The pre-higher-spin spacetime cubic vertex with the normalization \eqref{canonical-normalization} and the Poisson structures \eqref{eq:theta-normalized} reads
\begin{align}
    S_3&=\int \mho \,\{\ttb^{\mu},\sa^{\nu}\}\{\sa_{\mu},\sa_{\nu}\}\approx -\im L_{\rm NC}^2e^{-\frac{3}{2}\tau}\int d^4 y\int \tK  \,\p^{\mu}\sa^{\nu}\theta^{\alpha\dot\alpha\,\beta\dot\beta}\p_{\alpha\dot\alpha}\sa_{\mu}\p_{\beta\dot\beta}\sa_{\nu}\nn\\
    &\approx -\im L_{\rm NC}^2e^{-\frac{3}{2}\tau}\int  d^4y\int \tK\,\Big[\p^{\mu}\sa^{\nu}\big(\lambda^{\alpha}\bar\mu^{\beta}+\lambda^{\beta}\bar\mu^{\alpha})\p_{\alpha\dot\alpha}\sa_{\mu}\p_{\beta}{}^{\dot\alpha}\sa_{\nu}\nn\\
    &\qquad \qquad \qquad \qquad + \p^{\mu}\sa^{\nu}(y^{\gamma\dot\alpha}y^{\delta\dot\beta}+y^{\gamma\dot\beta}y^{\delta\dot\alpha})\lambda_{\gamma}\bar\mu_{\delta}\p_{\alpha\dot\alpha}\sa_{\mu}\p^{\alpha}{}_{\dot\beta}\sa_{\nu}\Big]\,,
\end{align}
where 
\begin{align}
    \sa_{\alpha\dot\alpha}&=\sum_s\frac{\lambda^{\beta(s)}\bar\mu^{\beta(s)}}{\langle \lambda\,\bar\mu\rangle^s}\Big[A_{\beta(2s)\alpha\,\dot\alpha}+4s\,\eps_{\alpha\beta} \sA_{\beta(2s-1)\,\dot\alpha}\Big]\,.
\end{align}
Note that we will rescale $\theta\mapsto \frac{1}{\langle \lambda\,\bar\mu\rangle}\theta$ for convenience. 

To improve our notation, we switch Greek letters back to Roman letters. For instance, $(\alpha,\dot\alpha) \mapsto (\ta,\dot\ta)$ etc. Then, we can write higher-spin cubic vertices as
\begin{align}
    S_3&=-\im L_{\rm NC}^2e^{-\frac{3}{2}\tau}\int d^4y\int_{\P1} \tK\,\Big(\circled{0}+\circled{1}+\circled{2}+\circled{3}\Big)
\end{align}
where 
\begin{subequations}
    \begin{align}
    \circled{0}&=\sum_{s_i}\digamma^{\ta(2s_1)|(\tn,\tto)}_{\tb(2s_2),\tc(2s_3)}(\p_{\tm\dot\tm}A_{\ta(2s_1)\tv\,\dot\tv})(\p_{\tn \dot\bullet}A^{\tb(2s_2)\tm\,\dot\tm})(\p_{\tto}{}^{\dot\bullet}A^{\tc(2s_3)\tv\,\dot\tv})\nn\\
    &\quad +\sum_{s_i}\widetilde{\digamma}^{\ta(2s_1)}_{\tb(2s_2),\tc(2s_3)|(\tn,\tto)}(y^{\tn\dot\ta}y^{\tto\dot\tb}+y^{\tto\dot\ta}y^{\tn\dot\tb})(\p_{\tm\dot\tm}A_{\ta(2s_1)\tv\,\dot\tv})(\p_{\bullet\dot\ta}A^{\tb(2s_2)\tm\,\dot\tm})(\p^{\bullet}{}_{\dot \tb}A^{\tc(2s_3)\tv\,\dot\tv})\,,\\
    \circled{1}&=\sum_{s_i}\digamma^{\ta(2s_1)|(\tn,\tto)}_{\tb(2s_2),\tc(2s_3)}\Big[-4s_3(\p_{\tm\dot\tm}A_{\ta(2s_1)\tc\,\dot\tv})(\p_{\tn\dot\bullet}A^{\tb(2s_2)\tm\,\dot\tm})(\p_{\tto}{}^{\dot\bullet}\sA^{\tc(2s_3-1)\,\dot\tv})\nn\\
    &\quad \qquad \qquad \qquad -4s_2(\p^{\tb}{}_{\dot\tm}A_{\ta(2s_1)\tv\,\dot\tv})(\p_{\tn\dot\bullet}\sA^{\tb(2s_2-1)\,\dot\tm})(\p_{\tto}{}^{\dot\bullet}A^{\tc(2s_3)\tv\,\dot\tv})\nn\\
    &\quad \qquad \qquad \qquad+4s_1(\p_{\tm\dot\tm}\sA_{\ta(2s_1-1)\,\dot\tv})(\p_{\tn\dot\bullet}A^{\tb(2s_2)\tm\,\dot\tm})(\p_{\tto}{}^{\dot\bullet}A_{\ta}{}^{\tc(2s_3)\,\dot\tv})\Big]\,,\\
    &\quad +\sum_{s_i}\widetilde\digamma^{\ta(2s_1)}_{\tb(2s_2),\tc(2s_3)|(\tn,\tto)}(y^{\tn\dot\ta}y^{\tto\dot\tb}+y^{\tto\dot\ta}y^{\tn\dot\tb})\Big[-4s_3(\p_{\tm\dot\tm}A_{\ta(2s_1)\tc\,\dot\tv})(\p_{\bullet\dot\ta}A^{\tb(2s_2)\tm\,\dot\tm})(\p^{\bullet}{}_{\dot\tb}\sA^{\tc(2s_3-1)\,\dot\tv})\nn\\
    &\quad \qquad \qquad \qquad -4s_2(\p^{\tb}{}_{\dot\tm}A_{\ta(2s_1)\tv\,\dot\tv})(\p_{\bullet\dot\ta}\sA^{\tb(2s_2-1)\,\dot\tm})(\p^{\bullet}{}_{\dot\tb}A^{\tc(2s_3)\tv\,\dot\tv})\nn\\
    &\quad \qquad \qquad \qquad+4s_1(\p_{\tm\dot\tm}\sA_{\ta(2s_1-1)\,\dot\tv})(\p_{\bullet\dot\ta}A^{\tb(2s_2)\tm\,\dot\tm})(\p^{\bullet}{}_{\dot\tb}A_{\ta}{}^{\tc(2s_3)\,\dot\tv})\Big]\,,\\
    \circled{2}&=\sum_{s_i}\digamma^{\ta(2s_1)|(\tn,\tto)}_{\tb(2s_2),\tc(2s_3)}\Big[+16s_2s_3(\p^{\tb}{}_{\dot\tm}A^{\tc}{}_{\ta(2s_1)\,\dot\tv})(\p_{\tn\dot\bullet}\sA^{\tb(2s_2-1)\,\dot\tm})(\p_{\tto}{}^{\dot\bullet}\sA^{\tc(2s_3-1)\,\dot\tv})\nn\\
    &\quad \qquad \qquad \qquad-16s_1s_3(\p_{\tm\dot\tm}\sA_{\ta(2s_1-1)\,\dot\tv})(\p_{\tn\dot\bullet}A^{\tb(2s_2)\tm\,\dot\tm})(\p_{\tto}{}^{\dot\bullet}\sA^{\tc(2s_3-1)\,\dot\tv})\eps_{\ta}{}^{\tc}\nn\\
    &\quad \qquad \qquad \qquad-16s_1s_2(\p^{\tb}{}_{\dot\tm}\sA_{\ta(2s_1-1)\,\dot\tv})(\p_{\tn\dot\bullet}\sA^{\tb(2s_2-1)\,\dot\tm})(\p_{\tto}{}^{\dot\bullet}A_{\ta}{}^{\tc(2s_3)\,\dot\tv})\Big]\nn\\
    &\quad +\sum_{s_i}\widetilde\digamma^{\ta(2s_1)}_{\tb(2s_2),\tc(2s_3)|(\tn,\tto)}(y^{\tn\dot\ta}y^{\tto\dot\tb}+y^{\tto\dot\ta}y^{\tn\dot\tb})\Big[+16s_2s_3(\p^{\tb}{}_{\dot\tm}A^{\tc}{}_{\ta(2s_1)\,\dot\tv})(\p_{\bullet\dot\ta}\sA^{\tb(2s_2-1)\,\dot\tm})(\p^{\bullet}{}_{\dot\tb}\sA^{\tc(2s_3-1)\,\dot\tv})\nn\\
    &\quad \qquad \qquad \qquad-16s_1s_3(\p_{\tm\dot\tm}\sA_{\ta(2s_1-1)\,\dot\tv})(\p_{\bullet\dot\ta}A^{\tb(2s_2)\tm\,\dot\tm})(\p^{\bullet}_{\dot\tb}\sA^{\tc(2s_3-1)\,\dot\tv})\eps_{\ta}{}^{\tc}\nn\\
    &\quad \qquad \qquad \qquad-16s_1s_2(\p^{\tb}{}_{\dot\tm}\sA_{\ta(2s_1-1)\,\dot\tv})(\p_{\bullet\dot\ta}\sA^{\tb(2s_2-1)\,\dot\tm})(\p^{\bullet}{}_{\dot\tb}A_{\ta}{}^{\tc(2s_3)\,\dot\tv})\Big]\,,\\
    \circled{3}&=\sum_{s_i}\digamma^{\ta(2s_1)|(\tn,\tto)}_{\tb(2s_2),\tc(2s_3)}64s_1s_2s_3\,\eps_{\ta}{}^{\tc}(\p^{\tb}{}_{\dot\tm}\sA_{\ta(2s_1-1)\,\dot\tv})(\p_{\tn\dot\bullet}\sA^{\tb(2s_2-1)\,\dot\tm})(\p_{\tto}{}^{\dot\bullet}\sA^{\tc(2s_3-1)\,\dot\tv})\nn\\
    &\quad +\sum_{s_i}\widetilde\digamma^{\ta(2s_1)}_{\tb(2s_2),\tc(2s_3)|(\tn,\tto)}64s_1s_2s_3\,(y^{\tn\dot\ta}y^{\tto\dot\tb}+y^{\tto\dot\ta}y^{\tn\dot\tb})\eps_{\ta}{}^{\tc}(\p^{\tb}{}_{\dot\tm}\sA_{\ta(2s_1-1)\,\dot\tv})(\p_{\bullet\dot\ta}\sA^{\tb(2s_2-1)\,\dot\tm})(\p^{\bullet}{}_{\dot\tb}\sA^{\tc(2s_3-1)\,\dot\tv})\,.
\end{align}
\end{subequations}
One may refer to
\begin{subequations}\label{fiber-factor}
    \begin{align}
    \digamma^{\ta(2s_1)|(\tn,\tto)}_{\tb(2s_2),\tc(2s_3)}:&=\frac{1}{\langle \lambda\,\bar\mu\rangle^{s_1+s_2+s_3+1}}\lambda^{\ta(s_1)}\bar\mu^{\ta(s_1)}(\lambda^{\tn}\bar\mu^{\tto}+\lambda^{\tto}\bar\mu^{\tn})\lambda_{\tb(s_2)}\bar\mu_{\tb(s_2)}\lambda_{\tc(s_3)}\bar\mu_{\tc(s_3)}\,,\\
    \widetilde\digamma^{\ta(2s_1)}_{\tb(2s_2),\tc(2s_3)|(\tn,\tto)}:&=\frac{1}{\langle \lambda\,\bar\mu\rangle^{s_1+s_2+s_3+1}}\lambda^{\ta(s_1)}\bar\mu^{\ta(s_1)}\lambda_{\tb(s_2)}\bar\mu_{\tb(s_2)}\lambda_{\tc(s_3)}\bar\mu_{\tc(s_3)}\lambda_{\tn}\bar\mu_{\tto}\,,
\end{align}
\end{subequations}
as the \emph{fiber factors}.

\underline{\emph{Normalization.}} Recall that Lemma \ref{lem:average} only depends on the numbers of $(\lambda,\bar\mu)$ upon averaging over fiber coordinates. Therefore, to obtain unique vertices $\cV_3^{(s_1,s_2,s_3)}$ with fixed external spins $(s_1,s_2,s_3)$ for the diagonalized higher-spin fields $\A$'s, one can adjust the spin entries of the fiber factors $\digamma^{\ta(2s_1)|(\tn,\tto)}_{\tb(2s_2),\tc(2s_3)}$ such that we will have equal numbers of $(\lambda,\bar\mu)$ upon averaging at the cubic. 
This 
leaves us with the time-dependent couplings $e^{-\frac{3\tau}{2}}$ in the local physical regime in the cubic sector.

\paragraph{Spacetime vertices.} For convenience, we denote the maximal total spin entering a cubic vertex as $\Lambda=s_1+s_2+s_3$ and the associated vertex as 
\begin{align}
    \cV_{\boldsymbol{m}}^{\Lambda-\boldsymbol{m}}=\sum_{s_i}V_{\boldsymbol{m}}^{(s_1,s_2,s_3)}\,,\qquad \boldsymbol{m}=0,1,2,3\,.
\end{align} 
we call $\cV_{\boldsymbol{m\geq 1}}$ the \emph{descendant vertices of $\cV_{\boldsymbol{0}}$}. This name is suggestive, since all descendant vertices associated with the second modes $\sA$'s can be extracted systematically by taking traces w.r.t. $\eps$ symbols. This will save us some effort of not performing the integral \eqref{eq:bridge} when the second higher-spin modes $\sA$ occur. Therefore, once we obtain all sub-vertices $V_{\boldsymbol{0}}$'s of $\cV^{\Lambda}_{\boldsymbol{0}}$, everything should follow.

$\bullet$ \underline{Vertices with maximal spin $\cV_{\boldsymbol{0}}^{\Lambda}$.} For the lowest level in spin, i.e. $\Lambda=3$ and $V_{\boldsymbol{0}}^{(1,1,1)}$, we can average over fiber coordinates and obtain
\begin{align}
    \cV_{\boldsymbol{0}}^3=V^{(1,1,1)}_{\boldsymbol{0}}&=\p_{\tm\dot\tm}A_{\tv\dot\tv}\p_{\tto\dot\bullet}A^{\tm\dot\tm}\p^{\tto\dot\bullet}A^{\tv\dot\tv}-\p_{\tm\dot\tm}A_{\tv\dot\tv}\p^{\tto}{}_{\dot\bullet}A^{\tm\dot\tm}\p_{\tto}{}^{\dot\bullet}A^{\tv\dot\tv}\nn\\
    &\quad +(y_{\tto}{}^{\dot\ta}y^{\tto\dot\tb}+y^{\tto\dot\ta}y_{\tto}{}^{\dot\tb})(\p_{\tm\dot\tm}A_{\tv\dot\tv})(p_{\bullet\dot\ta}A^{\tm\dot\tm})(\p^{\bullet}{}_{\dot\tb}A^{\tv\dot\tv})\nn\\
    &=0\,.
\end{align}
At $\Lambda=4$, we have three sub-vertices $V^{(2,1,1)}_{\boldsymbol{0}}$, $V^{(1,2,1)}_{\boldsymbol{0}}$ and $V^{(1,1,2)}_{\boldsymbol{0}}$. Their spacetime duals read
\begin{subequations}\label{V0-Lambda-4}
    \begin{align}
    V^{(2,1,1)}_{\boldsymbol{0}}&=-\frac{1}{2}\Big(\p_{\tm\dot\tm}A_{\tto\tn\tv\,\dot\tv}\p^{\tn}{}_{\dot\bullet}A^{\tm\dot\tm}\p^{\tto\dot\bullet}A^{\tv\dot\tv}+\p_{\tm\dot\tm}A_{\tn\tto\tv\,\dot\tv}\p^{\tn}{}_{\dot\bullet}A^{\tm\dot\tm}\p^{\tto\dot\bullet}A^{\tv\dot\tv}\Big)-y^{\ta \dot\ta}y^{\ta\dot\tb}(\p_{\tm\dot\tm}A_{\ta(2)\tv\,\dot\tv})(\p_{\bullet\dot\ta}A^{\tm\dot\tm})(\p^{\bullet}{}_{\dot\tb}A^{\tv\dot\tv})\nn\\
    &\equiv-\p_{\tm\dot\tm}A_{\ta(2)\tv\,\dot\tv}\p^{\ta}{}_{\dot\bullet}A^{\tm\dot\tm}\p^{\ta\dot\bullet}A^{\tv\dot\tv}-y^{\ta \dot\ta}y^{\ta\dot\tb}(\p_{\tm\dot\tm}A_{\ta(2)\tv\,\dot\tv})(\p_{\bullet\dot\ta}A^{\tm\dot\tm})(\p^{\bullet}{}_{\dot\tb}A^{\tv\dot\tv})\,\\
    V^{(1,2,1)}_{\boldsymbol{0}}&=-\frac{1}{2}\Big(\p_{\tm\dot\tm}A_{\tv\dot\tv}\p_{\tn\dot\bullet}A^{\tto\tn\tm\,\dot\tm}\p_{\tto}{}^{\dot\bullet}A^{\tv\dot\tv}+\p_{\tm\dot\tm}A_{\tv\dot\tv}\p_{\tn\dot\bullet}A^{\tn\tto\tm\,\dot\tm}\p_{\tto}{}^{\dot\bullet}A^{\tv\dot\tv}\Big)-y^{\ta\dot\ta}y^{\ta\dot\tb}(\p_{\tm\dot\tm}A_{\tv\dot\tv})(\p_{\bullet\dot\ta}A_{\ta(2)}{}^{\tm\dot\tm})(\p^{\bullet}{}_{\dot\tb}A^{\tv\dot\tv})\nn\,\\
    &\equiv-\p_{\tm\dot\tm}A_{\tv\dot\tv}\p_{\ta\dot\bullet}A^{\ta(2)\tm\,\dot\tm}\p_{\ta}{}^{\dot\bullet}A^{\tv\dot\tv}-y^{\ta\dot\ta}y^{\ta\dot\tb}(\p_{\tm\dot\tm}A_{\tv\dot\tv})(\p_{\bullet\dot\ta}A_{\ta(2)}{}^{\tm\dot\tm})(\p^{\bullet}{}_{\dot\tb}A^{\tv\dot\tv})\,.\\
    V^{(1,1,2)}_{\boldsymbol{0}}&=-\frac{1}{2}\Big(\p_{\tm\dot\tm}A_{\tv\dot\tv}\p_{\tn\dot\bullet}A^{\tm\,\dot\tm}\p_{\tto}{}^{\dot\bullet}A^{\tto\tn\tv\dot\tv}+\p_{\tm\dot\tm}A_{\tv\,\dot\tv}\p_{\tto\dot\bullet}A^{\tm\dot\tm}\p_{\tn}{}^{\dot\bullet}A^{\tto\tn\tv\dot\tv}\Big)-y^{\ta\dot\ta}y^{\ta\dot\tb}(\p_{\tm\dot\tm}A_{\tv\dot\tv})(\p_{\bullet\dot\ta}A^{\tm\dot\tm})(\p^{\bullet}{}_{\dot\tb}A_{\ta(2)}{}^{\tv\dot\tv})\nn\,\\
    &\equiv-\p_{\tm\dot\tm}A_{\tv\dot\tv}\p_{\ta\dot\bullet}A^{\tm\dot\tm}\p_{\ta}{}^{\dot\bullet}A^{\ta(2)\tv\,\dot\tv}-y^{\ta\dot\ta}y^{\ta\dot\tb}(\p_{\tm\dot\tm}A_{\tv\dot\tv})(\p_{\bullet\dot\ta}A^{\tm\dot\tm})(\p^{\bullet}{}_{\dot\tb}A_{\ta(2)}{}^{\tv\dot\tv})\,.
\end{align}
\end{subequations}
At $\Lambda=5$, we have 
\begin{align}
    \cV^5_{\boldsymbol{0}}=V^{(3,1,1)}+V^{(1,3,1)}+V^{(1,1,3)}+V^{(2,2,1)}+V^{(1,2,2)}+V^{(2,1,2)}\,.
\end{align}
Due to tracelessness of $A$'s, we get
\begin{align}
    V^{(3,1,1)}=V^{(1,3,1)}=V^{(1,1,3)}=0\,.
\end{align}
Next, we also have
\begin{subequations}
    \begin{align}
        V^{(2,2,1)}=V^{(1,2,2)}=V^{(2,1,2)}=0
    \end{align}
\end{subequations}
due to cancellation. This result is in agreement with the observation in \cite{Steinacker:2023cuf} that all cubic vertices vanish for odd $\Lambda$. Namely, one needs to have an even number of pairs $(\lambda,\bar\mu)$ so that the averaging can be non-trivial.

At $\Lambda=6$, due to tracelessness, $V^{(4,1,1)}=V^{(1,4,1)}=V^{(1,1,4)}=0$. However, we get
\begin{subequations}
    \begin{align}
        V^{(3,2,1)}_{\boldsymbol{0}}&=\frac{1}{3}\p_{\tm\dot\tm}A_{\ta(2)\tto\tn\tv\,\dot\tv}\p^{\tn}{}_{\dot\bullet}A^{\ta(2)\tm\,\dot\tm}\p^{\tto\dot\bullet}A^{\tv\,\dot\tv}+\cC_{(3,2,1)}(y)\,,\\
        V^{(3,1,2)}_{\boldsymbol{0}}&=\frac{1}{3}\p_{\tm\dot\tm}A_{\ta(2)\tto \tn\tv\,\dot\tv}\p^{\tn}{}_{\dot\bullet}A^{\tm\,\dot\tm}\p^{\tto\dot\bullet}A^{\ta(2)\tv\,\dot\tv}+\cC_{(3,1,2)}(y)\,,\\
        V^{(2,3,1)}_{\boldsymbol{0}}&=\frac{1}{3}\p_{\tm\dot\tm}A_{\ta(2)\tv\,\dot\tv}\p_{\tn \dot\bullet}A^{\ta(2)\tto\tn\tm\,\dot\tm}\p_{\tto}{}^{\dot\bullet}A^{\tv\,\dot\tv}+\cC_{(2,3,1)}(y)\,,\\
        V^{(1,3,2)}_{\boldsymbol{0}}&=\frac{1}{3}\p_{\tm\dot\tm}A_{\tv\,\dot\tv}\p_{\tn \dot\bullet}A^{\ta(2)\tto\tn\tm\,\dot\tm}\p_{\tto}{}^{\dot\bullet}A_{\ta(2)}{}^{\tv\,\dot\tv}+\cC_{(1,3,2)}(y)\,,\\
        V^{(1,2,3)}_{\boldsymbol{0}}&=\frac{1}{3}\p_{\tm\dot\tm}A_{\tv\dot\tv}\p_{\tn\dot\bullet}A_{\ta(2)}{}^{\tm\,\dot\tm}\p_{\tto}{}^{\dot\bullet}A^{\ta(2)\tto\tn\tv\,\dot\tv}+\cC_{(1,2,3)}(y)\,,\\
        V^{(2,1,3)}_{\boldsymbol{0}}&=\frac{1}{3}\p_{\tm\dot\tm}A_{\ta(2)\tv\,\dot\tv}\p_{\tn\dot\bullet}A^{\tm\,\dot\tm}\p_{\tto}{}^{\dot\bullet}A^{\ta(2)\tto\tn\tv\,\dot\tv}+\cC_{(2,1,3)}(y)\,;
    \end{align}
\end{subequations}
and 
\begin{align}
   24 V^{(2,2,2)}_{\boldsymbol{0}}=&+(\p_{\tm\dot\tm} A_{\tb\tb\tv\,\dot\tv})(\p_{\tn\dot\bullet}A^{\tto \tb\tm\,\dot\tm})(\p_{\tto}{}^{\dot\bullet}A^{\tb\tn\tv\,\dot\tv})+(\p_{\tm\dot\tm}A_{\tb(2)\tv\,\dot\tv})(\p_{\tn\dot\bullet}A^{\tb(2)\tm\,\dot\tm})(\p_{\tto}{}^{\dot\bullet}A^{\tto\tn\tv\,\dot\tv})\nn\\
    &+(\p_{\tm\dot\tm}A^{\tb\tn}{}_{\tv\,\dot\tv})(\p_{\tn\dot\bullet}A_{\tb(2)}{}^{\tm\,\dot\tm})(\p_{\tto}{}^{\dot\bullet}A^{\tto\tb\tv\,\dot\tv})+(\p_{\tm\dot\tm}A_{\tb(2)\tv\,\dot\tv})(\p_{\tto\dot\bullet}A_{\te}{}^{\tb\tm\,\dot\tm})(\p^{\tto\dot\bullet}A^{\tc\te\tv\,\dot\tv})\nn\\
    &+(\p_{\tm\dot\tm} A_{\tc(2)\tv\,\dot\tv})(\p_{\tn\dot\bullet}A^{\tto\tn\tm\,\dot\tm})(\p_{\tto}{}^{\dot\bullet}A^{\tc(2)\tv\,\dot\tv})+(\p_{\tm\dot\tm}A^{\tc\tn}{}_{\tv\,\dot\tv})(\p_{\tn\dot\bullet}A^{\tto\tc\tm\,\dot\tm})(\p_{\tto}{}^{\dot\bullet}A_{\tc(2)}{}^{\tv\,\dot\tv})\nn\\
    &-(\p_{\tm\dot\tm}A_{\tc\tb\tv\,\dot\tv})(\p_{\tto\dot\bullet}A^{\tb\td\tm\,\dot\tm})(\p^{\tto\dot\bullet}A_{\td}{}^{\tc\tv\,\dot\tv})+(\p_{\tm\dot\tm}A_{\tc(2)\tv\,\dot\tv})(\p_{\tn\dot\bullet}A^{\tc\tn\tm\,\dot\tm})(\p_{\tto}{}^{\dot\bullet}A^{\tto\tc\tv\,\dot\tv})\nn\\
    &+(\p_{\tm\dot\tm}A^{\tto\tb}{}_{\tv\,\dot\tv})(\p_{\tn\dot\bullet}A_{\tb(2)}{}^{\tm\,\dot\tm})(\p_{\tto}{}^{\dot\bullet}A^{\tb\tn\tv\,\dot\tv})+(\p_{\tm\dot\tm}A^{\tto\tc}{}_{\tv\,\dot\tv})(\p_{\tn\dot\bullet}A^{\tc\tn\tm\,\dot\tm})(\p_{\tto}{}^{\dot\bullet}A_{\tc(2)}{}^{\tv\,\dot\tv})\nn\\
    &+(\p_{\tm\dot\tm}A^{\tto\tn}{}_{\tv\,\dot\tv})(\p_{\tn\dot\bullet}A_{\tc(2)}{}^{\tm\,\dot\tm})(\p_{\tto}{}^{\dot\bullet}A^{\tc(2)\tv\,\dot\tv})+\cC_{(2,2,2)}(y)\,.
\end{align}
Here, $\cC_{(i,j,k)}(y)$ are contributions that break local Lorentz-invariance due to the presence of $y^{\alpha\dot\alpha}$. Observe that if the entries have little difference in spins, there will be more contractions with no simplification. Other higher-spin vertices $\cV_{\boldsymbol{0}}^{\Lambda}$  for $\Lambda\geq 7$ can be obtained analogously.  

$\bullet$ \underline{Extracting descendant vertices.} Let us now show how to extract descendant vertices at $\Lambda=4$ as a simple example. From \eqref{V0-Lambda-4}, we get
\begin{subequations}
    \begin{align}
        V_{\boldsymbol{1}}^{(\redone,1,1)}&=(\p_{\tm\dot\tm}\sA_{\tn\dot\tv})(\p^{\tn}{}_{\dot\bullet}A^{\tm\dot\tm})(\p_{\tv}{}^{\dot\bullet}A^{\tv\dot\tv})+(\p_{\tm\dot\tm}\sA_{\tto\dot\tv})(\p_{\tv\dot\bullet}A^{\tm\dot\tm})(\p^{\tto\dot\bullet}A^{\tv\dot\tv})\nn\\
        &\quad -(y_{\tv}{}^{\dot\ta}y^{\ta\dot\tb}+y^{\ta\dot\ta}y_{\tv}{}^{\dot\tb})(\p_{\tm\dot\tm}\sA_{\ta\dot\tv})(\p_{\bullet\dot\ta}A^{\tm\dot\tm})(\p^{\bullet}{}_{\dot\tb}A^{\tv\dot\tv})\,,\\
        V_{\boldsymbol{1}}^{(1,\redone,1)}&=(\p^{\tto}{}_{\dot\bullet}A_{\tv\dot\tv})(\p_{\tn\dot\bullet}\sA^{\tn\dot\tm})(\p_{\tto}{}^{\dot\bullet}A^{\tv\dot\tv})+(\p^{\tn}{}_{\dot\tm}A_{\tv\dot\tv})(\p_{\tn\dot\bullet}\sA^{\tto\dot\tm})(\p_{\tto}{}^{\dot\bullet}A^{\tv\dot\tv})\nn\\
        &\quad -(y^{\tm\dot\ta}y^{\ta\dot\tb}+y^{\ta\dot\ta}y^{\tm\dot\tb})(\p_{\tm\dot\tm}A_{\tv\dot\tv})(\p_{\bullet\dot\ta}\sA_{\ta}{}^{\dot\tm})(\p^{\bullet}{}_{\dot\tb}A^{\tv\dot\tv})\,,\\
        V_{\boldsymbol{1}}^{(1,1,\redone)}&=(\p_{\tm\dot\tm}A^{\tn}{}_{\dot\tv})(\p_{\tn\dot\bullet}A^{\tm\dot\tm})(\p_{\tto}{}^{\dot\bullet}\sA^{\tto\dot\tv})+(\p_{\tm\dot\tm}A^{\tto}{}_{\dot\tv})(\p_{\tn\dot\bullet}A^{\tm\dot\tm})(\p_{\tto}{}^{\dot\bullet}\sA^{\tn\dot\tv})\nn\\
        &\quad -(y^{\tv\dot\ta}y^{\ta\dot\tb}+y^{\ta\dot\ta}y^{\tv\dot\tb})(\p_{\tm\dot\tm}A_{\tv\dot\tv})(\p_{\bullet\dot\ta}A^{\tm\dot\tm})(\p^{\bullet}{}_{\dot\tb}\sA_{\ta}{}^{\dot\tv})\,,
    \end{align}
\end{subequations}
where the [red] index indicates the trace parts of $A$'s. 

Notice that $\Lambda=6$ is the first case where we can reach until the descendant vertex $\cV_{\boldsymbol{3}}^{3}=V_{\boldsymbol{3}}^{(\redone,\redone,\redone)}$. We leave this as an exercise for enthusiastic readers.

\paragraph{Vertices with diagonalized modes.} Since we have diagonalized $(A,\sA)$ into helicity-modes $\A^{\pm}$, it will be useful to substitute
\begin{align}
    A_{\ta(2s_1)\tm\,\dot\tm}=\frac{1}{2}\big(\A^+_{(\ta(2s)\tm)\,\dot\tm}+\A^-_{(\ta(2s)\tm)\,\dot\tm}\big)\,,\quad \sA_{\ta(2s-1)\,\dot\tm}=-\frac{\im}{2} \big(\A^+_{\ta(2s-1)\,\dot\tm}-\A^-_{\ta(2s-1)\,\dot\tm}\big)\,,
\end{align}
after obtaining the pre-diagonalized vertices. Note that one can have all configurations of ``helicities'' in the cubic vertices for Lorentzian \hsikkt, i.e.
\begin{align}
    S_3\ni \big\{V_{+++},V_{-++},V_{+-+},V_{++-},V_{--+},V_{-+-},V_{+--},V_{---}\big\}\,.
\end{align}
\subsection{Quartic action} 

Next, let us consider the quartic vertices which come from
\begin{align}
    S_4=\int \mho \{\sa^{\mu},\sa^{\nu}\}\{\sa_{\mu},\sa_{\nu}\}\approx -L_{\rm NC}^4e^{-3\tau}\int d^4y \int \tK\, \theta^{\tm\dot\tm\tn\dot\tn}\p_{\tm\dot\tm}\sa^{\ta\dot\ta}\p_{\tn\dot\tn}\sa^{\tb\dot\tb}\theta^{\tp \dot\tp \tq\dot\tq}\p_{\tp\dot\tp}\sa_{\ta\dot\ta}\p_{\tq\dot\tq}\sa_{\tb\dot\tb}\,.
\end{align}
This provides
\begin{align}
    S_4\approx -L_{\rm NC}^4e^{-3\tau}\int d^4y \sum_{s_i}\int_{\P^1}\tK\, (\ldots)\,,
\end{align}
where \small
\begin{align}
   (\ldots)&= \digamma_{\tI|\ta(2s_1),\tb(2s_2)}^{\tc(2s_3),\td(2s_4)|(\tm,\tn)|(\tp,\tq)}\p_{\tm\dot\bullet}\cA^{\ta(2s_1)\ta\,\dot\ta}\p_{\tn}{}^{\dot\bullet}\cA^{\tb(2s_2)\tb\,\dot\tb}\p_{\tp\dot\diamond}\cA_{\tc(2s_3)\ta\,\dot\ta}\p_{\tq}{}^{\dot\diamond}\cA_{\td(2s_4)\tb\,\dot\tb}\nn\\
   &+\digamma_{\tI\tI|\ta(2s_1),\tb(2s_2)|(\te_1,\te_2)}^{\tc(2s_3),\td(2s_4)|(\tm,\tn)}\p_{\tm\dot\bullet}\cA^{\ta(2s_1)\ta\,\dot\ta}\p_{\tn}{}^{\dot\bullet}\cA^{\tb(2s_2)\tb\,\dot\tb}(y^{\te_1\dot\tp}y^{\te_2\dot\tq}+y^{\te_2\dot\tp}y^{\te_1\dot\tq})\p_{\diamond\dot\tp}\cA_{\tc(2s_3)\ta\,\dot\ta}\p^{\diamond}{}_{\dot\tq}\cA_{\td(2s_4)\tb\,\dot\tb}\nn\\
   &+\digamma_{\tI\tI\tI|\ta(2s_1),\tb(2s_2)|(\tf_1,\tf_2)}^{\tc(2s_3),\td(2s_4)|(\tp,\tq)}(y^{\tf_1\dot\tm}y^{\tf_2\dot\tn}+y^{\tf_2\dot\tm}y^{\tf_1\dot\tn})\p_{\diamond\dot\tm}\cA^{\ta(2s_1)\ta\,\dot\ta}\p^{\diamond}{}_{\dot\tn}\cA^{\tb(2s_2)\tb\,\dot\tb}\p_{\tp\dot\bullet}\cA_{\tc(2s_3)\ta\,\dot\ta}\p_{\tq}{}^{\dot\bullet}\cA_{\td(2s_4)\tb\,\dot\tb}\nn\\
   &+ \digamma_{\tI\tV|\ta(2s_1),\tb(s_2)|(\tf_1,\tf_2)|(\te_1,\te_2)}^{\tc(2s_3),\td(2s_4)}(y^{\te_1\dot\tp}y^{\te_2\dot\tq}+y^{\te_2\dot\tp}y^{\te_1\dot\tq})(y^{\tf_1\dot\tm}y^{\tf_2\dot\tn}+y^{\tf_2\dot\tm}y^{\tf_1\dot\tn})\times\nn\\
   &\qquad \qquad\qquad \qquad\qquad\qquad \qquad \times \p_{\diamond\dot\tm}\cA^{\ta(2s_1)\ta\,\dot\ta}\p^{\diamond}{}_{\dot\tn}\cA^{\tb(2s_2)\tb\,\dot\tb}\p_{\circ\dot\tp}\cA_{\tc(2s_3)\ta\,\dot\ta}\p^{\circ}{}_{\dot\tq}\cA_{\td(2s_4)\tb\,\dot\tb}\,,
\end{align}
\normalsize
and the fiber factors in this case are
\begin{subequations}
    \begin{align}
    \digamma_{\tI|\ta(2s_1),\tb(2s_2)}^{\tc(2s_3),\td(2s_4)|(\tm,\tn)|(\tp,\tq)}=&\frac{1}{\langle\lambda\,\bar\mu\rangle^{s_1+s_2+s_3+s_4+2}}\lambda^{\tc(s_3)}\bar\mu^{\tc(s_3)}\lambda^{\td(s_4)}\bar\mu^{\td(s_4)}\times\nn\\
    &\times(\lambda^{\tm}\bar\mu^{\tn}+\lambda^{\tn}\bar\mu^{\tm})(\lambda^{\tp}\bar\mu^{\tq}+\lambda^{\tq}\bar\mu^{\tp})\lambda_{\ta(s_1)}\bar\mu_{\ta(s_1)}\lambda_{\tb(s_2)}\bar\mu_{\tb(s_2)}\,,\\
    \digamma_{\tI\tI|\ta(2s_1),\tb(s_2)|(\te_1,\te_2)}^{\tc(2s_3),\td(2s_4)|(\tm,\tn)}=&\frac{1}{\langle\lambda\,\bar\mu\rangle^{s_1+s_2+s_3+s_4+2}}\lambda^{\tc(s_3)}\bar\mu^{\tc(s_3)}\lambda^{\td(s_4)}\bar\mu^{\td(s_4)}\times\nn\\
    &\times(\lambda^{\tm}\bar\mu^{\tn}+\lambda^{\tn}\bar\mu^{\tm})\lambda_{\ta(s_1)}\bar\mu_{\ta(s_1)}\lambda_{\tb(s_2)}\bar\mu_{\tb(s_2)}\lambda_{\te_1}\bar\mu_{\te_2}\,,\\
    \digamma_{\tI\tI\tI|\ta(2s_1),\tb(s_2)|(\tf_1,\tf_2)}^{\tc(2s_3),\td(2s_4)|(\tp,\tq)}=&\frac{1}{\langle\lambda\,\bar\mu\rangle^{s_1+s_2+s_3+s_4+2}}\lambda^{\tc(s_3)}\bar\mu^{\tc(s_3)}\lambda^{\td(s_4)}\bar\mu^{\td(s_4)}\times\nn\\
    &\times(\lambda^{\tp}\bar\mu^{\tq}+\lambda^{\tq}\bar\mu^{\tp})\lambda_{\ta(s_1)}\bar\mu_{\ta(s_1)}\lambda_{\tb(s_2)}\bar\mu_{\tb(s_2)}\lambda_{\tf_1}\bar\mu_{\tf_2}\,,\\
    \digamma_{\tI\tV|\ta(2s_1),\tb(s_2)|(\tf_1,\tf_2)|(\te_1,\te_2)}^{\tc(2s_3),\td(2s_4)}=&\frac{1}{\langle\lambda\,\bar\mu\rangle^{s_1+s_2+s_3+s_4+2}}\lambda^{\tc(s_3)}\bar\mu^{\tc(s_3)}\lambda^{\td(s_4)}\bar\mu^{\td(s_4)}\times\nn\\
    &\times\lambda_{\ta(s_1)}\bar\mu_{\ta(s_1)}\lambda_{\tb(s_2)}\bar\mu_{\tb(s_2)}\lambda_{\tf_1}\bar\mu_{\tf_2}\lambda_{\te_1}\bar\mu_{\te_2}\,,
\end{align}
\end{subequations}
Similar to the cubic case, we denote \begin{align}
    \cV^{\Lambda}_{\boldsymbol{m}}=\sum_{s_i}V^{(s_1,s_2,s_3,s_4)}_{\boldsymbol{m}}\,,\qquad \boldsymbol{m}=0,1,2,3,4\,
\end{align}
to be the quartic vertex with maximal total spin, i.e. $\cV^{\Lambda}_{\boldsymbol{0}}$ and its associated descendants. Since there are 4 $\hs$-valued gauge fields entering the quartic vertices, we can have $\cV^{\Lambda-4}_{\boldsymbol{4}}$ if the maximal total spin is high enough. Below, we provide the simplest example of quartic vertices. 

$\bullet$ The lowest maximal total spin we can have is $\Lambda=4$. This corresponds to
\begin{align}
    V^{(1,1,1,1)}_{\boldsymbol{0}}=&+(\p_{\tm\dot\bullet}A^{\ta\dot\ta})(\p^{\tp \dot\bullet}A^{\tb\dot\tb})(\p_{\tp\dot\diamond}A_{\ta\dot\ta})(\p^{\tm\dot\diamond}A_{\tb\dot\tb})+(\p_{\tm\dot\bullet}A^{\ta\dot\ta})(\p^{\tp\dot\bullet}A^{\tb\dot\tb})(\p^{\tm}{}_{\dot\diamond}A_{\ta\dot\ta})(\p_{\tp}{}^{\dot\diamond}A_{\tb\dot\tb})\nn\\
    &+y_{\te_1}{}^{\dot\tp}y^{\te_2\dot\tq}\Big[\p_{\te_2\dot\bullet}A^{\ta\dot\ta}\p^{\te_1\dot\bullet}A^{\tb\dot\tb}\p_{\diamond\dot\tp}A_{\ta\dot\ta}\p^{\diamond}{}_{\dot\tq}A_{\tb\dot\tb}+\p^{\te_1}{}_{\dot\bullet}A^{\ta\dot\ta}\p_{\te_2}{}^{\dot\bullet}A^{\tb\dot\tb}\p_{\diamond\dot\tp}A_{\ta\dot\ta}\p^{\diamond}{}_{\dot\tq}A_{\tb\dot\tb}\Big]\nn\\
    &+y_{\tf_1}{}^{\dot\tm}y^{\tf_2\dot\tn}\Big[\p_{\diamond\dot\tm}A^{\ta\dot\ta}\p^{\diamond}{}_{\dot\tn}A^{\tb\dot\tb}\p_{\tf_2\dot\bullet}A_{\ta\dot\ta}\p^{\tf_1\dot\bullet}A_{\tb\dot\tb}+\p_{\diamond\dot\tm}A^{\ta\dot\ta}\p^{\diamond}{}_{\dot\tn}A^{\tb\dot\tb}\p^{\tf_1}{}_{\dot\bullet}A_{\ta\dot\ta}\p_{\tf_2}{}^{\dot\bullet}A_{\tb\dot\tb}\Big]\nn\\
    &+\frac{1}{2}\big(y_{\tf_1}{}^{\dot\tm}y^{\tf_2\dot\tn}+y^{\tf_2\dot\tm}y_{\tf_1}{}^{\dot\tn}\big)\big(y_{\tf_1}{}^{\dot\tp}y^{\tf_2\dot\tq}+y^{\tf_2\dot\tp}y_{\tf_1}{}^{\dot\tq}\big)\p_{\diamond\dot\tm}A^{\ta\dot\ta}\p^{\diamond}{}_{\dot\tn}A^{\tb\dot\tb}\p_{\circ\dot\tp}A_{\ta\dot\ta}\p^{\circ}{}_{\dot\tq}A_{\tb\dot\tb}\,.
\end{align}
Next, one can check that $\cV_{\boldsymbol{0}}^5=0$ at $\Lambda=5$ due to various cancellation. This is consistent with previous observation in \cite{Steinacker:2023cuf}. All other higher-spin cases are left as an exercise. 
\section{Some 3-point amplitudes}\label{app:B}
This appendix translates the 3-point amplitudes 
    \begin{align*}
        \cM_3(1^{\pm}_2,2^{\pm}_1,3^{\pm}_1)&\sim e^{-\frac{3}{2}\tau}L_{\rm NC}^2\,[2\,3]( \eps_2^{\pm,\tm\,\dot\tm}p_{1,\tm\dot\tm})\big(\eps^{\pm,\tv\dot\tv}_3\rho_2^{\ta}\rho_3^{\ta}\eps^{\pm}_{1,\ta(2)\tv\,\dot\tv}\big)\nn\\
        &\quad+ e^{-\frac{3}{2}\tau}L_{\rm NC}^2\langle 2\,3\rangle(\eps_2^{\pm,\tm\dot\tm}p_{1,\tm\dot\tm})(\eps_3^{\pm,\tv\dot\tv}y^{\ta\dot\ta}y^{\ta\dot\tb}\tilde \rho_{2\dot\ta}\tilde\rho_{3\dot\tb}\eps^{\pm}_{1,\ta(2)\tv\,\dot\tv})\nn\\
        &\quad+(\text{permutation})\,,\\
        \cM_3(1^{\pm}_1,2^{\pm}_2,3^{\pm}_1)&\sim e^{-\frac{3}{2}\tau}L_{\rm NC}^2\,[2\,3]\big(\eps_2^{\pm,\ta(2)\tm\,\dot\tm}\rho_{2\ta}\rho_{3\ta}p_{1,\tm\dot\tm}\big)(\eps_{3\,\pm}^{\tv\dot\tv}\eps^{\pm}_{1\tv\,\dot\tv})\nn\\
        &\quad + e^{-\frac{3}{2}\tau}L_{\rm NC}^2\langle 2\,3\rangle (p_1^{\tm\dot\tm}\eps^{\pm}_{2,\ta(2)\tm\,\dot\tm} y^{\ta\dot\ta}y^{\ta\dot\tb}\tilde\rho_{2\dot\ta}\tilde\rho_{3\dot\tb})(\eps_{1,\tv\dot\tv}^{\pm}\eps_{3,\pm}^{\tv\dot\tv})\nn\\
        &\quad+(\text{permutation})\,,\\
        \cM_3(1^{\pm}_1,2^{\pm}_1,3^{\pm}_2)&\sim e^{-\frac{3}{2}\tau}L_{\rm NC}^2\,[2\,3]( \eps_2^{\pm,\tm\,\dot\tm}p_{1,\tm\dot\tm})\big(\eps_{3\pm}^{\ta(2)\tv\,\dot\tv}\rho_{2\ta}\rho_{3\ta}\eps^{\pm}_{1,\tv\dot\tv}\big)\nn\\
        &\quad +e^{-\frac{3}{2}\tau}L_{\rm NC}^2\langle 2\,3\rangle(\eps_2^{\pm,\tm\dot\tm}p_{1,\tm\dot\tm})(\eps_{1,\pm}^{\tv\dot\tv}y^{\ta\dot\ta}y^{\ta\dot\tb}\eps^{\pm}_{3,\ta(2)\tv\,\dot\tv}\tilde\rho_{2\dot\ta}\tilde\rho_{3\dot\tb})\nn\\
        &\quad+(\text{permutation})\,
    \end{align*}
in Section \ref{sec:amplitudes} to spinor factorization form using the ansatz \eqref{YMhel-massless}. 

We begin with $\cM_3(1^{\pm}_2,2^{\pm}_1,3^{\pm}_1)$ where
\begin{subequations}
    \begin{align}
        \cM_3(1^{+}_2,2^{+}_1,3^{+}_1)&\sim 0\,,\\
        \cM_3(1^{-}_2,2^{+}_1,3^{+}_1)&\sim e^{-\frac{3}{2}\tau}L_{\rm NC}^2\,[2\,3]\frac{\langle \zeta\,1\rangle^2 [2\,1]\langle 2\,1\rangle\langle 3\,1\rangle[3\,\zeta]}{\langle 2\,\zeta\rangle \langle 3\,\zeta\rangle [1\,\zeta]}\nn\\
    &\quad +e^{-\frac{3}{2}\tau}L_{\rm NC}^2\langle 2\,3\rangle\frac{\langle \zeta\,1\rangle^2[2\,1][3\,\zeta]}{\langle 2\,\zeta\rangle\langle3\,\zeta\rangle[1\,\zeta]}\big(y^{\ta\dot\ta}y^{\ta\dot\ta}\rho^1_{\ta(2)}\tilde\rho_{2\dot\ta}\tilde\rho_{3\dot\tb}\big) -(2\leftrightarrow 3)\,,\\
        \cM_3(1^{+}_2,2^{-}_1,3^{+}_1)&\sim 0\,,\\
        \cM_3(1^{+}_2,2^{+}_1,3^{-}_1)&\sim -e^{-\frac{3}{2}\tau}L_{\rm NC}^2\,[2\,3]\frac{[2\,1][\zeta\,1]\langle 3\,\zeta\rangle^2}{[3\,\zeta]\langle1\,\zeta\rangle^2}\nn\\
    &\quad +e^{-\frac{3}{2}\tau}L_{\rm NC}^2\langle 2\,3\rangle\frac{\langle \zeta\,1\rangle[2\,1]\langle3\,\zeta\rangle[\zeta\,1]}{\langle 2\,\zeta\rangle[3\,\zeta]\langle 1\,\zeta\rangle^3}\big(y^{\ta\dot\ta}y^{\ta\dot\tb}\tilde\rho_{2\dot\ta}\tilde\rho_{3\dot\tb}\zeta_{\ta(2)}\big) +(1\leftrightarrow 2)\,,\\
        \cM_3(1^{-}_2,2^{-}_1,3^{+}_1)&\sim -e^{-\frac{3}{2}\tau}L_{\rm NC}^2\,[2\,3]\frac{\langle 2\,1\rangle^2\langle\zeta\,1\rangle[3\,\zeta]\langle3\,1\rangle}{\langle2\,\zeta\rangle\langle3\,\zeta\rangle}\nn\\
        &\quad +e^{-\frac{3}{2}\tau}L_{\rm NC}^2\langle2\,3\rangle\frac{\langle 2\,1\rangle\langle\zeta\,1\rangle[3\,\zeta]    }{[2\,\zeta]\langle 3\,\zeta\rangle}\big(y^{\ta\dot\ta}y^{\ta\dot\tb}\tilde \rho_{2\dot\ta}\tilde\rho_{3\dot\tb}\rho^1_{\ta(2)}\big)+(1\leftrightarrow 2)\,,\\
        \cM_3(1^{-}_2,2^{+}_1,3^{-}_1)&\sim 0\,,\\
        \cM_3(1^{+}_2,2^{-}_1,3^{-}_1)&\sim e^{-\frac{3}{2}\tau}L_{\rm NC}^2\,[2\,3]\frac{[\zeta\,1]\langle 2\,1\rangle[\zeta\,1]\langle 3\,\zeta\rangle^2\langle 2\,\zeta\rangle}{[2\,\zeta][3\,\zeta]\langle 1\,\zeta\rangle^3}\nn\\
        &\quad +e^{-\frac{3}{2}\tau}L_{\rm NC}^2\,\langle2\,3\rangle\frac{[\zeta\,1]^2\langle 2\,1\rangle\langle 3\,\zeta\rangle}{[2\,\zeta][3\,\zeta]\langle 1\,\zeta\rangle^3}\big(y^{\ta\dot\ta}y^{\ta\dot\tb}\tilde\rho_{2\dot\ta}\tilde\rho_{3\dot\tb}\zeta_{\ta(2)}\big)-(2\leftrightarrow 3) \,,\\
        \cM_3(1^{-}_2,2^{-}_1,3^{-}_1)&\sim0 \,.
    \end{align}
\end{subequations}
We observe that in the massless sector where we can impose the space-like gauge conditions \eqref{space-like-condition}, the 3-point amplitudes simplify. In particular, they depend only on spinors and not on the explicit Lorentz-violation structures, i.e. the structures with $y$'s factors. In what follows, we will suppress $y$-dependent contributions to simplify our expressions.

Next, let us consider $\cM_3(1^{\pm}_1,2^{\pm}_2,3^{\pm}_1)$ where
\begin{subequations}
    \begin{align}
        \cM_3(1^{-}_1,2^{+}_2,3^{+}_1)&\sim e^{-\frac{3}{2}\tau}L_{\rm NC}^2\Big(\,[2\,3]\frac{\langle \zeta\,1\rangle^2[3\,\zeta][2\,1]}{[1\,\zeta]\langle2\,\zeta\rangle^2}+\cC_{(1^-_1,2^+_2,3^+_1)}(y)\Big)-(2\leftrightarrow 3)\,,\\
        \cM_3(1^{+}_1,2^{+}_2,3^{-}_1)&\sim e^{-\frac{3}{2}\tau}L_{\rm NC}^2\Big(\,\,[2\,3]\frac{\langle 3\,\zeta\rangle^2[\zeta\,1][2\,1]}{[3\,\zeta]\langle 2\,\zeta\rangle^2}+\cC_{(1^+_1,2^+_2,3^-_1)}(y)\Big)+(1\leftrightarrow 2)\,,\\
        \cM_3(1^{-}_1,2^{-}_2,3^{+}_1)&\sim 0\,,\\
        \cM_3(1^{+}_1,2^{-}_2,3^{-}_1)&\sim 0 \,,
    \end{align}
\end{subequations}
and zero otherwise. Recall that $\cC(y)$ are contributions that are $y$-dependent. Finally,
\begin{subequations}
    \begin{align}
        \cM_3(1^{-}_1,2^{+}_1,3^{+}_2)&\sim e^{-\frac{3}{2}\tau}L_{\rm NC}^2\Big(\,[2\,3]\frac{\langle \zeta\,1\rangle^2[2\,1] [3\,\zeta]}{\langle3\,\zeta\rangle^2[1\,\zeta]}+\cC_{(1^-_1,2^+_1,3^+_2)}(y)\Big)-(2\leftrightarrow 3)\,,\\
        \cM_3(1^{+}_1,2^{+}_1,3^{-}_2)&\sim 0\,,\\
        \cM_3(1^{-}_1,2^{-}_1,3^{+}_2)&\sim e^{-\frac{3}{2}\tau}L_{\rm NC}^2\Big(\,[2\,3]\frac{\langle 2\,1\rangle\langle\zeta\,2\rangle\langle\zeta\,1\rangle[3\,\zeta]}{[2\,\zeta]\langle3\,\zeta\rangle^2}+\cC_{(1^-_1,2^-_1,3^+_2)}(y)\Big)+(1\leftrightarrow 2)\,,\\
        \cM_3(1^{+}_1,2^{-}_1,3^{-}_2)&\sim 0\,.
    \end{align}
\end{subequations}
Besides these amplitudes, we also wish to consider the simplest example where spin-3 arises. Let us consider for example
\begin{align*}
    V_{\boldsymbol{0}}^{3,2,1}=\frac{1}{3}\p_{\tm\dot\tm}A_{\ta(2)\tto\tn\tv\,\dot\tv}\p^{\tn}{}_{\dot\bullet}A^{\ta(2)\tm\,\dot\tm}\p^{\tto\dot\bullet}A^{\tv\,\dot\tv}+\cC_{(3,2,1)}(y)
\end{align*}
and write it in a helicity basis as
\begin{align}
     V_{\boldsymbol{0}}^{(3,2,1)}=\frac{1}{24}(\p_{\tm\dot\tm}\A^{\pm}_{\ta(2)\tto\tn\tv\,\dot\tv})(\p^{\tn}{}_{\dot\bullet}\A_{\pm}^{\ta(2)\tm\,\dot\tm})(\p^{\tto\dot\bullet}\A_{\pm}^{\tv\,\dot\tv})+\cC_{(3,2,1)}(y)\,.
\end{align}
Its associated 3-point amplitudes read
\begin{align}
    \cM_3(1_3^{\pm},2_2^{\pm},1_1^{\pm})\sim e^{-\frac{3}{2}\tau}&L_{\rm NC}^2\Big([2\,3]\big(\rho_2^{\tn}\rho_3^{\tto}\eps_{3\pm}^{\tv\,\dot\tv}\eps^{\pm}_{1,\ta(2)\tto\tn\tv\,\dot\tv}\eps_{2\pm}^{\ta(2)\tm\,\dot\tm}p_{1\tm\,\dot\tm}\big)+\cC_{(3,2,1)}(y)\Big)\nn\\
    &\qquad \qquad \qquad \qquad+\text{(permutations)}\,.
\end{align}
Using the polarization ansatz \eqref{YMhel}, we can write the above amplitudes as
\begin{subequations}
    \begin{align}
          \cM_3(1^{+}_3,2^{+}_2,3^{-}_1)&\sim 0\,,\\
        \cM_3(1^{+}_3,2^{-}_2,3^{-}_1)&\sim e^{-\frac{3}{2}\tau}L_{\rm NC}^2\Big(\,[2\,3]\frac{\langle 2\,1\rangle[\zeta\,1]\langle 2\,\zeta\rangle^2\langle3\,\zeta\rangle^2[\zeta\,1]}{\langle1\,\zeta\rangle^5[2\,\zeta][3\,\zeta]}+\cC_{(1^+_3,2^-_2,3^-_1)}(y)\Big)-(2\leftrightarrow 3)\,,\\
        \cM_3(1^{-}_3,2^{+}_2,3^{+}_1)&\sim e^{-\frac{3}{2}\tau}L_{\rm NC}^2\Big(\,[2\,3]\frac{\langle \zeta\,1\rangle^3[2\,1]\langle 2\,1\rangle\langle 3\,1\rangle[3\,\zeta]\langle \zeta\,1\rangle}{[1\,\zeta]\langle 2\,\zeta\rangle^3\langle3\,\zeta\rangle}+\cC_{(1^+_3,2^+_2,3^+_1)}(y)\Big)-(2\leftrightarrow 3)\,,\\
        \cM_3(1^{-}_3,2^{-}_2,3^{+}_1)&\sim e^{-\frac{3}{2}\tau}L_{\rm NC}^2\Big(\,[2\,3]\frac{\langle 2\,1\rangle^4\langle3\,1\rangle[3\,\zeta]\langle \zeta\,1\rangle}{[2\,\zeta]\langle 3\,\zeta\rangle} +\cC_{(1^-_3,2^-_2,3^+_1)}(y)\Big)+(1\leftrightarrow 2)\,.
    \end{align}
\end{subequations}
In the massless sector, all explicit Lorentz-violation contributions of the 3-point amplitudes vanish using \eqref{eq:space-like-condition-spinors}. The above reduces to
\begin{subequations}
    \begin{align}
        \cM_3(1^{+}_3,2^{-}_2,3^{-}_1)&\sim 0\,,\\
        \cM_3(1^{-}_3,2^{+}_2,3^{+}_1)&\sim 0\,,\\
        \cM_3(1^{-}_3,2^{-}_2,3^{+}_1)&\sim 0\,.
    \end{align}
\end{subequations}



\setstretch{0.8}
\footnotesize
\bibliography{twistor}
\bibliographystyle{JHEP}

\end{document}